\documentclass{article}

\usepackage[preprint]{neurips_2026}

\usepackage[utf8]{inputenc} 
\usepackage[T1]{fontenc}    
\usepackage{hyperref}       
\usepackage{url}            
\usepackage{booktabs}       
\usepackage{amsfonts}       
\usepackage{nicefrac}       
\usepackage{microtype}      
\usepackage{xcolor}         

\usepackage{amsmath,amssymb,amsthm,mathtools}
\usepackage[english]{babel}
\usepackage{chngcntr}
\usepackage{centernot}

\usepackage{geometry}
\usepackage{enumitem}

\usepackage{booktabs}

\newtheorem{heprinciple}{Heuristic Principle}

\newtheorem{theorem}{Theorem}

\newtheorem{proposition}{Proposition}

\title{No Certificate, No Execution: Certified Traces as a Foundation for Trustworthy AI Agents}

\author{
Xiao-Yang Liu Yanglet$^1$, Xiaodong Wang$^1$, Agostino Capponi$^2$\\
$^1$SecureFinAI Lab, Electrical Engineering, Columbia University \\
$^2$ Industrial Engineering and Operations Research, Columbia University \\
Emails: \{XL2427, XW2008, AC3827\}@columbia.edu \\
}

\begin{document}

\maketitle

\begin{abstract}
We argue that trustworthy AI agents, especially in high-stakes and policy-governed domains, \textit{should make execution conditional on certified traces rather than rely only on stronger generative models, output-level guardrails, or post-hoc audits}. A generative agent may propose useful recommendations, tool calls, reports, or candidate actions, but \textit{generation is not permission:
an action may be computable yet impermissible, and individually permissible actions may
compose into an impermissible trace}. We formalize trustworthy agency through a
\textbf{Proposal--Certification--Execution (PCE)} architecture: a probabilistic generating machine $M_G$ proposes candidate  execution traces, a \textbf{Permissibility Machine} $M_\Pi$ certifies whether a proposed trace is permitted
under a given policy system $\Pi$, and execution proceeds only for certified traces. 
The resulting
executable trace language is
\(
L_{\mathrm{exec}} = L_G \cap L_{\mathrm{cert}}(M_\Pi).
\)
Here a trace is not merely a retrospective log: before execution, it is a structured
pre-execution record submitted for certification, specifying the intended steps,
supporting evidence, proposed tool calls, required approvals, replayable computations,
credentials, and execution conditions.
This perspective complements chain-of-thought monitorability: visible reasoning may help
detect misbehavior, but \textit{monitorability is not certifiability}, and a reasoning trajectory is
only one component of a broader execution trace. The resulting formal principle is simple: an agent-generated trace should be authorized for execution only when it
carries a checkable certificate witnessing permissibility under $\Pi$: \textbf{no certificate, no execution}. We develop certified traces and \textit{Permissibility
Machines} as foundations for trustworthy AI agents, connect trace certification to proof-carrying
execution, proof memory, privacy, and zero-knowledge certificates, and propose a research agenda for evaluating agents by what generated traces can be safely
certified for execution, not by output accuracy alone.
\end{abstract}

\section{Introduction}
\label{sec:introduction}

Large language model-powered systems are moving beyond chatbots toward autonomous agents that participate in real-world operational workflows. Nowadays, individuals and institutions are actively exploring agents that retrieve documents, call APIs, write code, update memory, compute quantities, interact with external environments, and submit actions for execution~\citep{yao2023react,schick2023toolformer,debenedetti2024agentdojo}. 
Such exploratory agents raise a severe AI safety problem: a bad recommendation is an informational failure, but a bad execution can expose data, move money, alter records, or create institutional liability. Therefore, the central question is no longer whether an output looks acceptable, but whether the proposed trace is authorized to execute.  Capability is not authority. 
In high-stakes and policy-governed domains, agent behaviors must satisfy preset rules for access, privacy, evidence, computation, auditability, authorization, and execution.

Recent industry deployments make this trend concrete. OpenAI and PwC are building finance agents
for CFO workflows spanning planning, forecasting, reporting, procurement, payments, treasury, tax,
and close, with explicit emphasis on governance, controls, and human oversight~\citep{openai2026pwcfinance}.
American Express's Agentic Commerce Experiences initiative introduces agentic payment
credentials for verified AI agents, while Anthropic's financial-services agents connect to market data,
research platforms, and internal financial systems under governed access controls~\citep{amex2026ace,
anthropic2026financeagents}. ICBC similarly reports large-model and agentic systems for settlement
finance, credit risk, and bank-wide AI infrastructure~\citep{icbc2025interimqa}.
These deployments show that \textit{the question is moving from whether agents can assist finance teams
to whether agent-initiated actions can be authorized, constrained, and audited}.

\paragraph{Central principle.}
Trustworthy AI agents should not rely only on stronger generative models, output-level guardrails, or post-hoc audits. In high-stakes, policy-governed domains, execution should be authorized only for traces that carry checkable certificates under an explicit policy system. In short: no certificate, no execution.

The progression studied in this paper is
\[
\text{computable action}
\rightarrow
\text{permissible action}
\rightarrow
\text{permissible trace}
\rightarrow
\text{certified trace}
\rightarrow
\text{executable trace}.
\]
In this progression, ``permissible trace'' refers to a trace that satisfies the policy system, while
``certified trace'' refers to a proposed trace whose permissibility has been witnessed by a checkable
certificate. Execution turns a certified proposed trace into a realized trace that can later be audited.
A model may generate a candidate action, yet the action may be impermissible. Even if each
individual action is permissible, the resulting sequence may form an impermissible trace.
E.g., each database query may be individually authorized, while repeated queries to a
protected resource may violate a query-budget or privacy policy. Each trade may pass a local
order check, while the sequence violates a cumulative exposure constraint. A final answer may be
factually correct, while the derivation trace relies on unauthorized retrieval, fabricated evidence,
inconsistent units, or unreplayable computation. These examples point to the same conclusion:
\textit{the primary unit of execution trust should not be the final output, but the trace by which the output,
recommendation, or action was produced}.

Recent work on chain-of-thought monitorability ~\citep{korbak2025chain,openai2025cotmonitorability} highlights a related opportunity: if models expose
human-readable reasoning, monitors may inspect those reasoning trajectories for signs of
misbehavior. This is an important
but partial step. A chain of thoughts is one possible observation channel inside a broader execution
trace. High-stakes agents also produce retrievals, tool calls, database queries, memory writes, data
flows, logs, computations, and execution calls. Therefore, the safety object for agents is \textit{not merely the
chain of thoughts; it is the chain of executions}.

We propose \emph{certified traces} as the central unit of execution trust for AI agents. A trace is not only
a retrospective log of completed actions. Before execution, it is a candidate execution record: a
structured sequence of intended states, actions, tool calls, retrievals, computations, evidence,
memory updates, approvals, credentials, and execution conditions. After execution, the realized trace
becomes the audit record. A policy system $\Pi$ defines which traces are permissible. A
\emph{Permissibility Machine} is a certifying role: it checks whether a proposed trace carries a certificate
witnessing permissibility under $\Pi$. The resulting architecture separates three roles: a generating
machine proposes candidate traces, a Permissibility Machine certifies or rejects them, and an
executor acts only on certified traces.

In enterprise settings, this reframes accountability around three operational questions: \textbf{who
authorized this step, what evidence supported the authorization, and can the action be reconstructed
after failure?} A certified trace is intended to make these questions technically checkable rather than
left implicit in a generic system log. \textbf{This is where technical certification meets institutional accountability: liability cannot be assigned meaningfully if authorization evidence, policy context, and execution lineage are not recoverable}.

Although finance provides our main running examples, the trace-certification problem is broader.
Scientific and healthcare agents make the same issue visible: tool-using systems in chemistry,
biomedical discovery, and healthcare may retrieve sensitive data, call expert tools, and trigger
high-stakes downstream actions~\citep{bran2024chemcrow,tang2025aiscientists}.

We do not claim to solve all problems in verification, governance, or AI safety. Rather,
we propose a minimal formal interface for thinking about agent execution. The key objects are:
a policy-induced permissible trace region, a certified trace region, an execution-authorization rule,
and a certification boundary separating generated traces that may execute from those that must be
rejected or escalated.

Finance is used as a running domain because its policies, audit requirements, privacy constraints, evidence obligations, and execution risks make the trace-certification problem especially visible. However, the framework is intended more broadly for high-stakes, policy-governed AI agents: systems that retrieve sensitive information, call external tools, modify records, make regulated recommendations, or trigger real-world actions.

The paper makes four major contributions: it formulates the separation between generation and permission
at the trace level; introduces a \textbf{Proposal--Certification--Execution (PCE)} architecture; defines
certification-boundary measures for generated impermissibility, unsafe certification, missed
certification, and executable yield; and outlines a research agenda around proof memory, policy
models, privacy-preserving certification, and certified-agent evaluation.

\paragraph{Why now?}
This question has become urgent because AI systems are shifting from answer generation to delegated execution. Modern agents increasingly combine language models with retrieval, memory, tool use, code execution, database access, and external APIs. In such systems, safety failures may occur before the final output is produced, and they may concern source authorization, data flow, computation replay, cumulative risk, or policy history. This shift makes output inspection insufficient as the primary governance object. The relevant object is the proposed execution trace and the evidence required to authorize it.

\noindent \textbf{Take-home message}. \textit{Monitorability asks whether available observations are sufficient to detect or suspect
safety-relevant behavior. Certifiability asks whether the trace contains enough checkable evidence
to establish that policy obligations are satisfied. Execution authorization asks whether certification
has succeeded, so that the executor is allowed to act. To summarize,
\[
\boxed{
\text{monitorable trace}
\;\not\equiv\;
\text{certifiable trace}
\;\not\equiv\;
\text{execution-authorized trace}}.
\]}

\paragraph{Organization.}
Section~\ref{sec:output-guardrails} explains why output guardrails and local action checks are
insufficient. Section~\ref{sec:axioms} provides minimal axioms for certifiable execution and the
basic execution safety theorem. Sections~\ref{sec:pce-architecture}--\ref{sec:executable-languages}
introduce the proposal--certification--execution architecture and the executable trace language.
Sections~\ref{sec:policy-proof-memory}--\ref{sec:certification-boundary} discuss policy models,
proof memory, and certification boundary measures. Section~\ref{sec:privacy-zk} studies privacy
and zero-knowledge trace certification. Section~\ref{sec:objections-agenda} concludes with
objections, limitations, and open problems.

\section{Why Output Guardrails Fall Short}
\label{sec:output-guardrails}

The purpose of this section is to motivate trace certification by showing that, for
agentic systems, the safety-relevant predicate is often a property of the execution
trace rather than a property of the final output alone.

Output-level guardrails inspect final answers, block prohibited text, or apply local checks
to individual tool calls. These mechanisms
are useful, but they are structurally limited for agentic systems. Recent agent
benchmarks and attacks show that tool-using agents fail not only through final text,
but also through retrieval, tool outputs, environment observations, memory, and
multi-step actions~\citep{ruan2024toolemu,debenedetti2024agentdojo,yang2024watchagents,zhang2024agentsafetybench}. Recent security surveys
make the same point at the systems level: agentic risks arise across memory, tool
execution, multi-agent coordination, ecosystem, and governance layers, and many
threats are temporally extended rather than single-step failures~\citep{chu2026lasm}.
Thus the safety-relevant violation may occur inside the trace even when the final
output appears \textit{safe}.

We give three illustrative examples:
\begin{itemize}[leftmargin=*, label=$\bullet$]
    \item \textbf{Privacy and access control are naturally trace-dependent}. A single query to a protected
database may be authorized, while repeated queries to password-related or identity-related records
within a time window may violate a query-budget policy. Similarly, a final answer may contain no
personally identifiable information, while the trace that produced it may have exposed sensitive data
through prompts, retrieval, logs, memory, or tool calls.
    \item \textbf{Payment authorization is not output classification.} A payment instruction may be well-formed and may even match a local rule, while the full trace lacks valid authorization evidence: the beneficiary account may have changed, the approval may be stale, the amount may exceed a contextual limit, or the credential may not be scoped to this transaction. The relevant question is not whether the payment text looks valid, but whether the execution trace carries checkable authorization evidence.
    \item \textbf{A generated financial claim may be correct as a sentence but impermissible as a derivation.} An agent may correctly state that a firm's gross margin improved while relying on fabricated citations, inconsistent reporting periods, wrong XBRL/iXBRL tags, incompatible units, or arithmetic that cannot be replayed. This issue is especially visible in XBRL-based financial report analysis, where LLM agents must retrieve filing facts, interpret XBRL concepts, align periods and units, and perform reliable calculations~\citep{han2024xbrlagent}. The issue is not only whether the final claim is true, but whether it was produced by a permissible and auditable derivation trace.
\end{itemize}
Table~\ref{tab:trace_challenges} summarizes the trace-level challenges illustrated by these examples. These examples illustrate why output-level safety and local action safety do not
imply trace-level permissibility:
\[
\text{acceptable final output} \centernot\Rightarrow \tau \in L_\Pi,
\qquad
\text{locally permitted actions} \centernot\Rightarrow \tau \in L_\Pi .
\]
The failure is not that guardrails and monitors are useless; the failure is that
they do not by themselves define an execution-authorization rule. Output
guardrails ask whether the final output appears acceptable. Local action checks
ask whether each step is individually allowed. Monitoring asks whether available
observations raise concern. Trace certification asks a stronger question: whether
the proposed execution trace contains enough checkable evidence to satisfy the
policy system before execution.

\begin{table}[t]
\centering
\small
\renewcommand{\arraystretch}{1.22}
\begin{tabular}{p{0.24\linewidth}|p{0.37\linewidth}|p{0.30\linewidth}}
\hline\hline
\textbf{Challenge}
&
\textbf{Why output checking is insufficient}
&
\textbf{Permissibility-machine response}
\\
\hline
Output-centric guardrails
&
The final answer may look safe while the trace contains unsafe retrieval,
tool use, memory access, or execution.
&
Certify the full trace \(\tau\), not merely the output.
\\
\hline
Correct answer, wrong derivation
&
A claim may be true but supported by fabricated citations, inconsistent periods,
or unreplayable computations.
&
Require evidence binding and computation replay.
\\
\hline
Source access control
&
A generated answer may rely on sources the agent was not authorized to access.
&
Certify retrieval permissions and source lineage.
\\
\hline
Auditability
&
A later reviewer must know what was done and under which policy version.
&
Maintain persistent proof memory:
trace lineage plus policy lineage.
\\
\hline
Privacy-preserving compliance
&
The final output may omit private information while the trace leaks it through
prompts, tools, logs, or memory.
&
Use trace-level privacy certification and, when needed, zero-knowledge proofs.
\\
\hline\hline
\end{tabular}
\caption{Trace-level challenges motivating permissibility certification.}
\label{tab:trace_challenges}
\end{table}

This gives a hierarchy of increasingly strong execution controls:
\[
\text{output checking}
\;\rightarrow\;
\text{local action checks}
\;\rightarrow\;
\text{monitoring}
\;\rightarrow\;
\text{trace certification}
\;\rightarrow\;
\text{certified execution}.
\]
Each step adds information, but only the last two provide an authorization
condition for execution.
Our point is not to discard monitoring, but to place it inside a stronger execution-governance
pipeline.
The formal obstruction is trace non-compositionality: policies may be satisfied
locally at each step while being violated by the sequence.

\begin{proposition}[Trace Non-Compositionality]
\label{prop:trace-noncompositionality}
There exist policy systems and traces whose individual actions are locally permissible, while the
full trace is impermissible under the policy system.
\end{proposition}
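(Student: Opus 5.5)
This is an existence claim, so the plan is to exhibit one explicit policy system and one trace, using the query-budget example from Section~\ref{sec:output-guardrails}. Since the formal notions of trace and policy system are only introduced later (Section~\ref{sec:axioms}), I will fix a minimal formalization first. A trace is a finite sequence $\tau = (a_1,\dots,a_n)$ over an action alphabet $A$. A policy system $\Pi$ induces two objects:
\begin{itemize}[leftmargin=*, label=$\bullet$]
\item a set $P_{\mathrm{loc}} \subseteq A$ of locally permissible actions;
\item a permissible trace language $L_\Pi \subseteq A^*$.
\end{itemize}
A trace is called locally permissible when every $a_i \in P_{\mathrm{loc}}$. With these definitions, the proposition amounts to showing that $L_\Pi$ can be strictly smaller than $P_{\mathrm{loc}}^*$.

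The construction is as follows.
\begin{itemize}[leftmargin=*, label=$\bullet$]
\item Take $A = \{q\}$, where $q$ is a single authorized query to a protected resource.
\item Set $P_{\mathrm{loc}} = \{q\}$.
\item Define $L_\Pi = \{\tau : |\tau|_q \le k\}$ for a budget $k \ge 1$; this is a cumulative query-budget (or privacy-budget) policy.
\item Take $\tau = q^{k+1}$.
\end{itemize}
Each action of $\tau$ is $q \in P_{\mathrm{loc}}$, so $\tau$ is locally permissible. Its query count is $k+1 > k$, so $\tau \notin L_\Pi$. This single inequality is the whole verification.

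To show that the phenomenon is not an artifact of a one-letter alphabet, I would add a second instance, the cumulative exposure constraint:
\begin{itemize}[leftmargin=*, label=$\bullet$]
\item actions are trades $t_x$ with $|x| \le c$, so each one passes the local order check;
\item $L_\Pi$ requires $\bigl|\sum_i x_i\bigr| \le C$ for some $C < 2c$;
\item the trace $(t_c, t_c)$ has every step locally permissible but violates the cumulative bound.
\end{itemize}
I would also remark on the general reason these examples work. Whenever $L_\Pi$ is not determined by a per-letter predicate, that is, whenever it is not of the form $P^*$, such a counterexample exists. This is exactly the non-compositionality the proposition names.

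The only real obstacle is definitional rather than mathematical: the notion of ``locally permissible'' has to be fixed before the later axioms so that the statement is not vacuous. I will handle this by stating the minimal formalization explicitly and noting that it is consistent with the definitions the paper gives afterwards. Under any reasonable reading, in which a one-step trace $(a)$ is permissible exactly when $a \in P_{\mathrm{loc}}$, the same counterexample goes through.
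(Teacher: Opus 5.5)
Your construction is correct and is essentially the paper's own proof. The paper uses a query budget of $5$ violated by six individually authorized queries in one window, which is your $k=5$. Your cumulative-exposure instance mirrors the appendix proposition, which uses three $10\%$ orders against a $20\%$ limit. One caution on your closing remark, which the proof does not need: the phenomenon occurs exactly when $P_{\mathrm{loc}}^*\not\subseteq L_\Pi$, not whenever $L_\Pi$ fails to be of the form $P^*$. For example, take $P_{\mathrm{loc}}=\{a:(a)\in L_\Pi\}$ and an approval-before-execution policy. That $L_\Pi$ is not of the form $P^*$, yet every trace over $P_{\mathrm{loc}}$ is permissible.
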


\paragraph{Interpretation.}
This existence claim does not say that every policy is trace-dependent. It says
that important policy classes are: query budgets, temporal obligations, purpose
limitations, cumulative exposure limits, approval-before-execution rules, and
computation-replay requirements cannot be certified from the final output alone.
They require trace-level evidence.

\paragraph{Proof sketch.}
Construct a query-budget policy. Each individual query to a protected resource may be locally
authorized and therefore pass the action-level check. The policy, however, may permit at most five
queries to that resource within a time window. A trace containing six individually authorized
queries violates the trace-level budget, although every action is permissible. A full proof is given in Appx.~\ref{app:proofs}.

\section{Minimal Axioms of Certifiable Execution}
\label{sec:axioms}

We state the minimal assumptions under which trace certification implies execution safety.
Let $\Sigma^*$ denote the universe of encoded finite trace records and $\tau\in\Sigma^*$ denote a proposed trace submitted for certification. This trace may include
intended states, actions, tool calls, retrieval events, evidence, approvals, credentials, computations,
memory updates, and execution conditions. After execution, the realized trace records the events
actually performed and should conform to the certified proposed trace.
\begin{itemize}[leftmargin=*, label=$\bullet$]
    \item \textbf{Axiom 1: Policy boundary.}
A policy system $\Pi$ induces a semantic permissible trace region
\[
L_\Pi \subseteq \Sigma^*.
\]
A trace $\tau$ is permissible under $\Pi$ when $\tau\in L_\Pi$. This axiom defines the boundary
between permissible and impermissible traces.
\item \textbf{Axiom 2: Sound certification.}
A Permissibility Machine may be incomplete, conservative, or unable to certify many permissible
traces, but certification must be sound:
\(
L_{\mathrm{cert}}(M_\Pi) \subseteq L_\Pi.
\)
Thus every certified trace is semantically permissible. This axiom distinguishes a certificate from
a confidence score, self-approval, or post-hoc explanation.
\item \textbf{Axiom 3: Execution authorization.}
Execution is allowed only after certification succeeds:
\(
\mathrm{Execute}(\tau)
\;\Longrightarrow\;
\tau\in L_{\mathrm{cert}}(M_\Pi).
\)
This axiom makes certification operational rather than advisory: the executor cannot act on an
uncertified trace.
\item \textbf{Axiom 4: Execution conformance.}
Let $\tau^{\mathrm{prop}}$ denote the certified proposed trace and
$\tau^{\mathrm{real}}$ denote the realized trace after execution.
Execution conformance requires that either
\[
\mathrm{Conform}(\tau^{\mathrm{real}},\tau^{\mathrm{prop}})=1,
\]
or the executor halts, rolls back, or escalates before irreversible action. Moreover, conformance
must preserve permissibility:
\[
\tau^{\mathrm{prop}}\in L_\Pi
\;\wedge\;
\mathrm{Conform}(\tau^{\mathrm{real}},\tau^{\mathrm{prop}})=1
~~~~\Longrightarrow~~~~
\tau^{\mathrm{real}}\in L_\Pi .
\]
This axiom separates authorization of a proposed trace from audit of the realized trace. Certification
authorizes what the agent is permitted to do, while conformance checks whether the agent actually did it.
\end{itemize}

\begin{theorem}[Execution Authorization Safety]
\label{thm:execution-safety}
Under sound certification and execution authorization, every certified pre-execution trace submitted to the executor is semantically permissible:
\[
\mathrm{Execute}(\tau)
\;\Longrightarrow\;
\tau\in L_\Pi.
\]
\end{theorem}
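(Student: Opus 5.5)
The plan is a direct chaining of Axiom 3 and Axiom 2, treating $\mathrm{Execute}(\tau)$ as a predicate on proposed traces $\tau\in\Sigma^*$ and $L_{\mathrm{cert}}(M_\Pi)$, $L_\Pi$ as subsets of $\Sigma^*$. I fix an arbitrary proposed trace $\tau$ and assume $\mathrm{Execute}(\tau)$ holds. By Axiom 3 (execution authorization), $\tau\in L_{\mathrm{cert}}(M_\Pi)$. By Axiom 2 (sound certification), $L_{\mathrm{cert}}(M_\Pi)\subseteq L_\Pi$, so set inclusion gives $\tau\in L_\Pi$. Since $\tau$ was arbitrary, this yields $\mathrm{Execute}(\tau)\Rightarrow\tau\in L_\Pi$, i.e.\ the executed traces form a subset of $L_{\mathrm{cert}}(M_\Pi)$, which is a subset of $L_\Pi$.

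Axiom 1 is used only to make $L_\Pi$ a well-defined subset of $\Sigma^*$, so that the inclusion in Axiom 2 is meaningful. Axiom 4 is not needed here, and I will say so explicitly. The theorem concerns the \emph{proposed}, pre-execution trace submitted to the executor, not the realized trace $\tau^{\mathrm{real}}$. I will add a remark that the analogous statement for realized traces follows by applying the conformance-preservation clause of Axiom 4 to the conclusion just obtained, with $\tau^{\mathrm{prop}}=\tau$. In that case, if conformance fails, the executor halts, rolls back, or escalates before any irreversible action.

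The only real obstacle is not mathematical but a matter of bookkeeping. The symbol $\tau$ must denote the same object in Axiom 3 and in Axiom 2, namely the encoded proposed trace record. Certification must also not be applied to some re-encoded or modified version of the trace. I will state this identification as part of the setup, namely that the executor receives exactly the string $M_\Pi$ certified, so that the two implications compose without a gap.
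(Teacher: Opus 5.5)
Your proposal is correct and is essentially the paper's proof: Axiom~3 gives $\mathrm{Execute}(\tau)\Rightarrow\tau\in L_{\mathrm{cert}}(M_\Pi)$, and Axiom~2's inclusion $L_{\mathrm{cert}}(M_\Pi)\subseteq L_\Pi$ finishes the argument. Your remark on realized traces, via the conformance-preservation clause of Axiom~4, matches the paper's next theorem (Certified Execution with Conformance).
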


\begin{proof}
By execution authorization in Axiom 3,
\(
\mathrm{Execute}(\tau)
\;\Longrightarrow\;
\tau\in L_{\mathrm{cert}}(M_\Pi).
\)
By sound certification in Axiom 2, $L_{\mathrm{cert}}(M_\Pi)\subseteq L_\Pi$. Therefore $\tau\in L_\Pi$.
\end{proof}

\paragraph{Interpretation.}
The separation principle, \textit{generation is not permission}, explains why certification is needed.
The three axioms explain why certification makes execution safe. Policy defines the permissible
region; certification verifies that a trace lies inside it; execution authorization ensures that only
certified traces are acted upon. Proof memory, privacy-preserving certification, policy drift, and
escalation are not required to state this basic theorem, but they become essential when the policy
boundary is trace-dependent, dynamic, or only partially observable.

\begin{theorem}[Certified Execution with Conformance]
Under sound certification, execution authorization, and execution conformance,
every realized executed trace is permissible:
\[
\mathrm{Execute}(\tau^{\mathrm{prop}})=\tau^{\mathrm{real}}
\Longrightarrow
\tau^{\mathrm{real}}\in L_\Pi .
\]
\end{theorem}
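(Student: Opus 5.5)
The plan is to reduce the claim to Theorem~\ref{thm:execution-safety} and then push permissibility from the proposed trace to the realized trace using the conformance clause of Axiom~4.

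I read the hypothesis $\mathrm{Execute}(\tau^{\mathrm{prop}})=\tau^{\mathrm{real}}$ as saying two things. First, the executor actually acted on $\tau^{\mathrm{prop}}$, so $\mathrm{Execute}(\tau^{\mathrm{prop}})$ holds in the sense of Axiom~3. Second, the outcome is a realized trace rather than a halt, rollback, or escalation.

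\textbf{Step 1.} Apply Theorem~\ref{thm:execution-safety} to $\tau^{\mathrm{prop}}$. By Axiom~3, $\tau^{\mathrm{prop}}\in L_{\mathrm{cert}}(M_\Pi)$, and by Axiom~2 this gives $\tau^{\mathrm{prop}}\in L_\Pi$.

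\textbf{Step 2.} Invoke the first part of Axiom~4. For the executed, certified proposed trace, either $\mathrm{Conform}(\tau^{\mathrm{real}},\tau^{\mathrm{prop}})=1$, or the executor halts, rolls back, or escalates before irreversible action. The halt, rollback, and escalation branch is excluded by the hypothesis, because in that branch no realized executed trace $\tau^{\mathrm{real}}$ is produced as the output of $\mathrm{Execute}$. Hence conformance holds.

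\textbf{Step 3.} Apply the permissibility-preservation clause of Axiom~4 to the conjunction of $\tau^{\mathrm{prop}}\in L_\Pi$ (from Step~1) and $\mathrm{Conform}(\tau^{\mathrm{real}},\tau^{\mathrm{prop}})=1$ (from Step~2). This yields $\tau^{\mathrm{real}}\in L_\Pi$.

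The main obstacle is Step~2, which is interpretive rather than technical. The statement must be read so that $\mathrm{Execute}(\tau^{\mathrm{prop}})=\tau^{\mathrm{real}}$ refers only to a completed execution. If a partial or aborted execution, such as the prefix recorded before a halt, also counted as a realized trace, the disjunction in Axiom~4 would not force conformance, and the conclusion would need an additional assumption, for example that such prefixes are not counted as executed traces or are themselves in $L_\Pi$.

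I would therefore state this reading explicitly at the start of the proof: the escalation branch produces no realized executed trace. With that convention fixed, the argument is a two-line chaining of Axioms~2--4.
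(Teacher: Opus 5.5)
Your proposal is correct and follows the paper's proof essentially step for step. Axioms 3 and 2 give $\tau^{\mathrm{prop}}\in L_\Pi$, the non-halting branch of Axiom 4 gives $\mathrm{Conform}(\tau^{\mathrm{real}},\tau^{\mathrm{prop}})=1$, and the preservation clause then yields $\tau^{\mathrm{real}}\in L_\Pi$. The paper resolves your interpretive concern the same way, reading the hypothesis as ``execution proceeds rather than halting, rolling back, or escalating,'' so stating that convention explicitly only makes the same argument cleaner.
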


\begin{proof}
Suppose $\mathrm{Execute}(\tau^{\mathrm{prop}})=\tau^{\mathrm{real}}$.
By execution authorization, execution of $\tau^{\mathrm{prop}}$ implies
\[
\tau^{\mathrm{prop}}\in L_{\mathrm{cert}}(M_\Pi).
\]
By sound certification,
\[
L_{\mathrm{cert}}(M_\Pi)\subseteq L_\Pi,
\]
so $\tau^{\mathrm{prop}}\in L_\Pi$. By execution conformance, if execution proceeds rather
than halting, rolling back, or escalating, then
\[
\mathrm{Conform}(\tau^{\mathrm{real}},\tau^{\mathrm{prop}})=1.
\]
Using the conformance-preservation condition,
\[
\tau^{\mathrm{prop}}\in L_\Pi
\;\wedge\;
\mathrm{Conform}(\tau^{\mathrm{real}},\tau^{\mathrm{prop}})=1
\Longrightarrow
\tau^{\mathrm{real}}\in L_\Pi .
\]
Therefore every realized executed trace is permissible.
\end{proof}

\paragraph{Interpretation.}

The basic authorization-safety theorem proves that a trace certified before execution is permissible.
The conformance theorem adds the execution layer: the realized trace must conform to the certified
pre-execution trace, or the system must halt, roll back, or escalate before completion. Thus
``no certificate, no execution'' should be paired with ``no conformance, no completion.''

\section{Proposal--Certification--Execution Architecture}
\label{sec:pce-architecture}

We now define the three roles underlying certifiable execution: generation, certification, and
execution authorization. The purpose of this architecture is to prevent capability from being
confused with permission. A system may be able to propose a trace, but execution should occur
only after the trace has been certified under the given policy system $\Pi$.
\begin{itemize}[leftmargin=*, label=$\bullet$]
\item \textbf{Generator.}
A generating machine proposes a candidate trace:
\(
M_G: x \mapsto \tau,
\)
where $x$ is an input, prompt, task, or context, and $\tau$ is a finite proposed execution trace.
At this stage, $\tau$ is not yet a retrospective log; it is the structured record of what the agent
intends to do and what evidence or authorization context supports that execution. The role of $M_G$ is generative: it determines which traces can be produced.

\item \textbf{Permissibility Machine.}
A Permissibility Machine evaluates a proposed trace under a policy system $\Pi$ and returns a
certificate, a rejection, or an escalation:
\[
M_\Pi:(\tau,\Pi)\mapsto\{\pi,\bot,\mathrm{Escalate}\},
\]
where $\pi$ denotes a checkable certificate, $\bot$ denotes rejection, and
$\mathrm{Escalate}$ denotes transfer to a stronger certifier, a human reviewer, or an external review
process. \textit{High-risk execution need not always require manual approval, but it must require
checkable authorization evidence. Human approval
may be one component of a certificate, but certification is not defined as human approval: the
essential requirement is checkable authorization evidence}. The role of $M_\Pi$ is certifying: it determines whether the proposed trace has enough
evidence to be authorized for execution.

\item \textbf{Executor.}
An executor acts only on certified traces:
\(
E:\tau\mapsto o,
\)
where $o$ is the executed outcome. The role of $E$ is operational: it realizes the consequences of a
trace only after certification succeeds.
\end{itemize}

\paragraph{Principle: separation of proposal and certification.}
In high-stakes settings, the generator should not certify its own execution trace using only its
own generated explanation. The generator may supply evidence, proposed justifications, tool logs,
or candidate proof objects, but certification must be performed by a logically or institutionally
separate mechanism: for example, a policy engine, verifier, human reviewer, external auditor,
cryptographic proof system, or supervisory control. This separation is essential because a fluent
self-explanation is not a certificate. The agent may propose; it may not grant itself authority.

\paragraph{Three operational questions.}
The architecture separates three questions:
\[
\begin{array}{lll}
\text{Generator } M_G: & x \mapsto \tau
& \text{Can this trace be produced?} \\[2mm]
\text{Certifier } M_\Pi: & \tau \mapsto \{\pi,\bot,\mathrm{Escalate}\}
& \text{Can this trace be certified under } \Pi? \\[2mm]
\text{Executor } E: & \tau \mapsto o
& \text{Has execution been authorized?}
\end{array}
\]


\begin{figure}[t]
    \centering
    \includegraphics[width=0.95\textwidth]{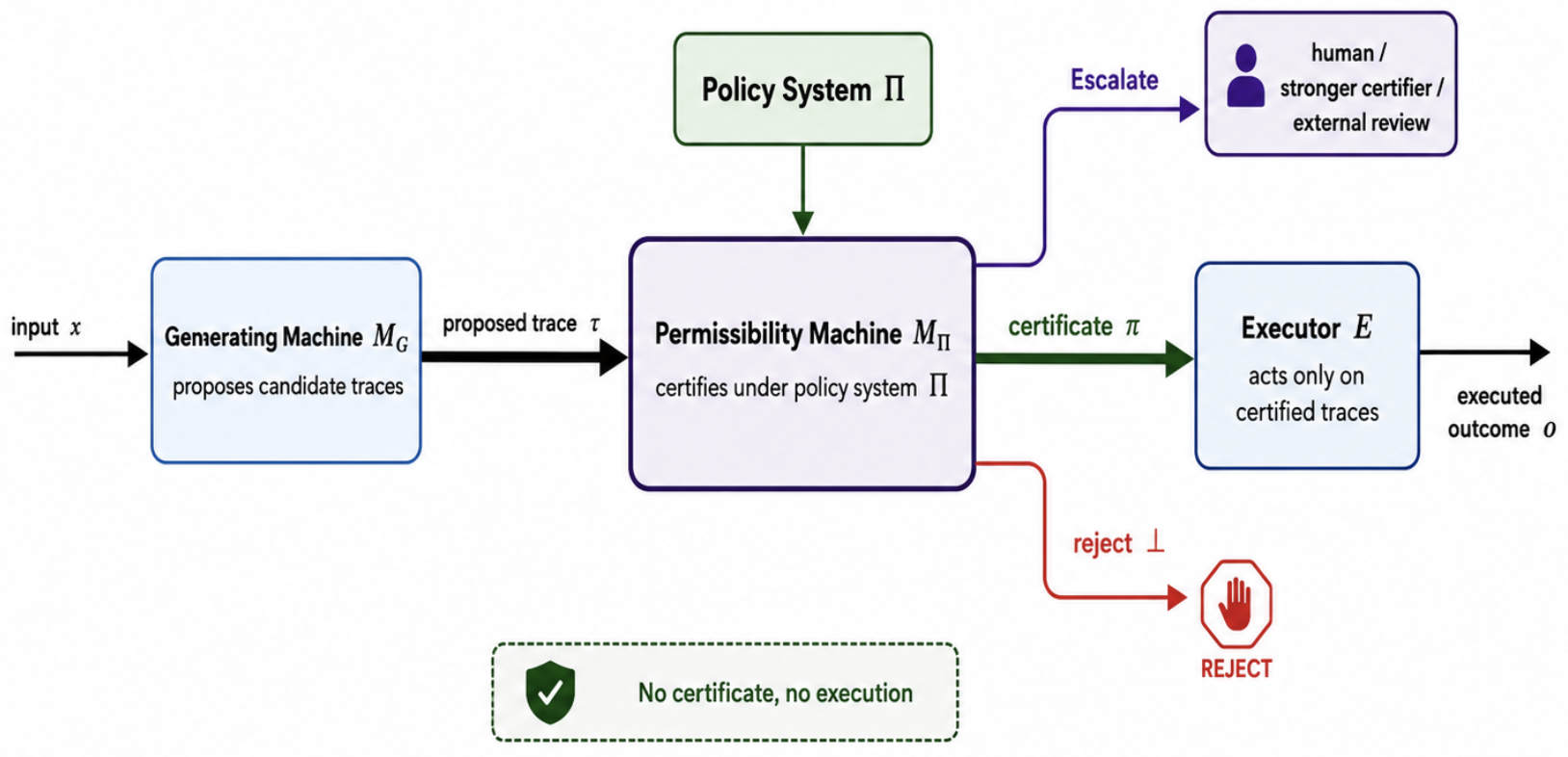} 
    \caption{The Proposal--Certification--Execution (PCE) architecture. A generating machine $M_G$ proposes
candidate traces, a Permissibility Machine $M_\Pi$ certifies, rejects, or escalates traces under a policy system $\Pi$, and execution is authorized only on the certificate path. The architecture enforces the invariant: no certificate, no execution.}
    \label{fig:pce-architecture}
    \vspace{-0.1in}
\end{figure}

\paragraph{A concrete instantiation.}
A Permissibility Machine need not be a single model or a single verifier. In a
financial-agent deployment, it may be implemented as a certifying service that
takes as input a proposed trace $\tau$, a policy system $\Pi$, proof memory $H$,
and external policy data $D_\Pi$, and returns
\[
M_\Pi(\tau,\Pi,H,D_\Pi)\in\{\pi,\bot,\mathrm{Escalate}\}.
\]
The proposed trace may include a task, retrieved sources, proposed tool calls,
computed quantities, approvals, credentials, risk checks, and intended execution
conditions. The policy system may include access-control rules, client mandates,
restricted lists, query budgets, exposure limits, approval-before-execution rules,
and computation-replay requirements. Proof memory stores the history needed for
certification, such as prior queries, current exposure, retrieved filings, policy
versions, approvals, and previous certificates.

For example, consider an agent that proposes to generate a client-facing financial
claim or initiate a payment instruction. The Permissibility Machine checks whether
the source was authorized, whether the data access matches the declared purpose,
whether required approvals are fresh and bound to the proposed trace, whether
risk or payment limits are satisfied, whether the computation is replayable, and
whether the credential is scoped to the intended action. If all required obligations
are witnessed, the machine emits a certificate
\[
\pi =
(\pi_{\mathrm{access}},
 \pi_{\mathrm{source}},
 \pi_{\mathrm{approval}},
 \pi_{\mathrm{risk}},
 \pi_{\mathrm{compute}},
 \pi_{\mathrm{lineage}}).
\]
If an obligation fails, it returns $\bot$; if evidence is incomplete or the action
exceeds the certifier's authority, it returns $\mathrm{Escalate}$. Thus the
Permissibility Machine is not an output classifier. It is an authorization
boundary: it decides whether the proposed trace carries enough checkable evidence
to be executed under $\Pi$.

This separation also clarifies the relation to monitoring and runtime enforcement. Runtime
verification monitors executions against specifications, while recent agent guardrails dynamically
check whether agent behavior satisfies safety requests~\citep{havelund2004jpaX,
xiang2024guardagent}. A Permissibility Machine plays a stronger role: it produces or verifies a
certificate required for execution authorization. Monitoring may provide evidence for certification,
but monitoring alone does not authorize execution.

\textbf{Architecture.}
A generating machine
proposes a candidate trace; a Permissibility Machine certifies, rejects, or escalates the trace under
the policy system; the executor acts only when certification succeeds. The operational chain is
\[
\boxed{
\text{Propose} \;\longrightarrow\; \text{Certify} \;\longrightarrow\; \text{Authorize Execution}}.
\]

Fig.~\ref{fig:pce-architecture} illustrates the
PCE architecture. The generating machine $M_G$ receives an input $x$ and proposes a candidate
trace $\tau$. The trace is then evaluated by the Permissibility Machine $M_\Pi$ under the policy
system $\Pi$. The certifier has three possible outcomes: it may return a certificate $\pi$, reject the
trace with $\bot$, or escalate the case to a human reviewer, stronger certifier, or external review
process. Execution is authorized only on the certificate path: the executor $E$ acts on $\tau$ only
when $M_\Pi(\tau,\Pi)=\pi$. Thus the figure encodes the central invariant of the architecture:
there is no direct execution path from generation to execution. Thus, no certificate, no execution!

\textbf{Design invariant}.
There is no direct execution path from $M_G$ to $E$. Execution is authorized only when
\(
M_\Pi(\tau,\Pi)=\pi.
\)
This invariant is an architectural expression of \emph{no certificate, no execution}!

\paragraph{Interpretation.}
The architecture does not assume that the generator is safe by construction, nor that the certifier
is complete. Its purpose is to separate proposal from authorization. The generator explores
candidate behavior; the Permissibility Machine governs admissibility under policy; 
the executor acts only on traces whose certification has succeeded.

\paragraph{Risk-tiered certification.}
The principle ``no certificate, no execution'' does not require the same certificate burden for every
agent action. Certification should be risk-tiered. Low-risk informational traces may require only
logging or lightweight provenance, while high-risk traces involving sensitive data, regulated advice,
external execution, asset transfer, or irreversible operations require stronger certificates. A useful
tiering is:
\[
\begin{array}{ll}
\mathrm{C0}: & \text{no external effect; logging only;}\\
\mathrm{C1}: & \text{low-risk informational response; lightweight provenance;}\\
\mathrm{C2}: & \text{client-facing or research claim; evidence and source certificate;}\\
\mathrm{C3}: & \text{sensitive data or tool access; access, purpose, and privacy certificate;}\\
\mathrm{C4}: & \text{regulated or irreversible action; authorization, policy, risk, and audit certificate;}\\
\mathrm{C5}: & \text{systemic or high-liability action; external, human, or regulator-grade review.}
\end{array}
\]
Thus certification is not a binary choice between no verification and full formal proof. The design
question is whether the certificate stack is proportionate to the execution risk of the trace.

\section{Executable Trace Languages}
\label{sec:executable-languages}

The PCE architecture turns execution authorization into a language-theoretic object. Let $\Sigma^*$ denote the universe of encoded finite trace records. The generator induces
a generated language $L_G \subseteq \Sigma^*$, consisting of the traces it can propose. For a policy system $\Pi$, the Permissibility Machine induces a certified language
$L_{\mathrm{cert}}(M_\Pi) \subseteq \Sigma^*$, consisting of proposed traces for which certification succeeds. Because execution is allowed only after certification, the executable
traces are precisely the generated traces that also lie in the certified language. This gives the central object of the paper: the executable trace language.

\textbf{Generated language.}
Let
\(
L_G := L(M_G) \subseteq \Sigma^*
\) 
denote the language of candidate execution traces proposed by the generating machine. Equivalently,
\[
L_G
=
\{\tau\in\Sigma^*:\exists x \text{ such that } M_G(x)=\tau\}.
\]

\textbf{Certified language.}
For a fixed policy system $\Pi$, let
\(
L_{\mathrm{cert}}(M_\Pi)
=
\{\tau\in\Sigma^*:M_\Pi(\tau,\Pi)=\pi\}
\)
denote the language of traces certified by the Permissibility Machine.

\textbf{Executable language.}
A proposed trace is authorized for execution only when certification succeeds. We therefore define
the executable trace language by
\[
L_{\mathrm{exec}}(\mathcal A)
=
L_G\cap L_{\mathrm{cert}}(M_\Pi),
\]
where
\(
\mathcal A=(M_G,M_\Pi,E)
\)
denotes the proposal--certification--execution architecture.

\begin{theorem}[Executable Language Composition]
\label{thm:exec-language-composition}
Fix a policy system $\Pi$ and architecture
\(
\mathcal A=(M_G,M_\Pi,E).
\)
Under execution authorization, a trace is executable if and only if it is both generated and certified.
Equivalently,
\(
L_{\mathrm{exec}}(\mathcal A)
=
L_G\cap L_{\mathrm{cert}}(M_\Pi).
\)
\end{theorem}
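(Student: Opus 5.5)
The plan is to make ``executable'' an operational notion that does not depend on the definition of $L_{\mathrm{exec}}(\mathcal A)$, and then prove the set equality by double inclusion. Call a trace $\tau\in\Sigma^*$ \emph{operationally executable} under $\mathcal A$ if the pipeline can actually produce $\mathrm{Execute}(\tau)$. That means some input $x$ gives $M_G(x)=\tau$, and the executor $E$ acts on $\tau$. The architecture's design invariant says there is no direct path from $M_G$ to $E$, so every executed trace first enters the pipeline as a proposal of $M_G$. The theorem then says this operational set coincides with $L_G\cap L_{\mathrm{cert}}(M_\Pi)$. Once this is shown, the displayed definition of $L_{\mathrm{exec}}(\mathcal A)$ is justified rather than merely stipulated.

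\emph{Inclusion $\subseteq$.} Suppose $\tau$ is executed. Because $E$ only receives traces proposed by the generator, there is some $x$ with $M_G(x)=\tau$, so $\tau\in L_G$ by the definition of the generated language. By Axiom~3 (execution authorization), $\mathrm{Execute}(\tau)$ implies $\tau\in L_{\mathrm{cert}}(M_\Pi)$. Hence $\tau\in L_G\cap L_{\mathrm{cert}}(M_\Pi)$. Soundness (Axiom~2) is not needed for this step. It only matters if one also wants the corollary $L_{\mathrm{exec}}(\mathcal A)\subseteq L_\Pi$, which then follows immediately, as in Theorem~\ref{thm:execution-safety}.

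\emph{Inclusion $\supseteq$.} Take $\tau\in L_G\cap L_{\mathrm{cert}}(M_\Pi)$. Then some $x$ satisfies $M_G(x)=\tau$, and $M_\Pi(\tau,\Pi)=\pi$. By the PCE description, the executor acts exactly on the certificate path, i.e.\ whenever $M_\Pi(\tau,\Pi)=\pi$. So $\tau$ reaches $E$ and is executable.

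I expect this converse to be the main obstacle. Axiom~3 is stated only as a necessary condition ($\mathrm{Execute}\Rightarrow$ certified), so by itself it gives only the inclusion $\subseteq$. For $\supseteq$ I would make explicit a liveness assumption, already implicit in the design invariant: $E$ does not refuse a certified generated trace. In other words, ``executable'' means \emph{authorized for execution}, not that execution actually occurs. With that reading, being generated and certified is exactly the authorization condition and the equality follows. I would state this reading explicitly in the proof so the ``if and only if'' is honest.
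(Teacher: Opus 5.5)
Your proposal is correct and matches the paper's proof: a double inclusion where $\tau\in L_G$ because executed traces originate from the generator, $\tau\in L_{\mathrm{cert}}(M_\Pi)$ follows from Axiom~3, and conversely generated-and-certified traces are authorized for execution. Your explicit reading of ``executable'' as ``authorized for execution'' (a sufficiency assumption that $E$ does not refuse certified generated traces) is exactly what the paper's converse uses implicitly, since its phrase ``authorized for execution only after certification succeeds'' gives only the forward inclusion.
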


\textbf{Interpretation.}
The identity
\(
L_{\mathrm{exec}}(\mathcal A)=L_G\cap L_{\mathrm{cert}}(M_\Pi)
\)
captures the proposal--certification--execution logic in one line. Generation determines what can
be proposed. Certification determines which proposed traces are authorized for execution. Policy
determines the semantic target: the permissible trace language $L_\Pi$. Thus execution is not
identified with generation, and certification is not merely advisory; it is the authorization condition
between proposal and execution.

\textbf{Proof sketch.}
If a trace is executable, then it must have been produced by the generator, so $\tau\in L_G$.
Execution authorization also requires certification, so $\tau\in L_{\mathrm{cert}}(M_\Pi)$. Hence
\(
\tau\in L_G\cap L_{\mathrm{cert}}(M_\Pi).
\)
Conversely, if $\tau$ lies in this intersection, then it is a generated trace for which certification
has succeeded. Under the architecture, such traces are authorized for execution. A full proof is
given in Appx.~\ref{app:proofs}.

\textbf{Semantic versus operational boundaries.}
The permissible language $L_\Pi$ and the certified language $L_{\mathrm{cert}}(M_\Pi)$ are distinct
objects. The former is semantic: it is defined by the policy system. The latter is operational: it is
defined by what the certifier can recognize and certify. Soundness requires
\[
L_{\mathrm{cert}}(M_\Pi)\subseteq L_\Pi,
\]
but completeness need not hold. The mismatch between the semantic permissible boundary and the
operational certification boundary is the source of the evaluation quantities studied next.

\section{Policy Models and Proof Memory}
\label{sec:policy-proof-memory}

The policy system determines what must be certified; proof memory determines what can be
remembered, replayed, and certified. This distinction is essential. A policy system specifies the
obligations a trace must satisfy, while proof memory specifies the proof-relevant history available
to the certifier.

\textbf{Policy-system interface.}
At the semantic level, a policy system induces a permissible trace language
\[
\Pi \leadsto L_\Pi\subseteq\Sigma^* .
\]
A trace $\tau$ is permissible under $\Pi$ when $\tau\in L_\Pi$. For proof-carrying execution, inspired by proof-carrying code~\citep{necula1997proof}, the
policy system may also support a derivability judgment,
\(
\Pi\vdash \tau:\mathrm{Permitted},
\)
meaning that the permissibility of $\tau$ can be established within the policy calculus. A
certificate $\pi$ is a checkable witness for such a judgment.

\textbf{Canonical policy models.}
Different policy structures create different certification requirements. We use four canonical
models as running examples.
\begin{itemize}[leftmargin=*, label=$\bullet$]
    \item \textbf{Semantic trace-language policies} represent a policy only by its induced language
    $L_\Pi\subseteq\Sigma^*$. This model is sufficient for execution safety and
    certification-boundary analysis.

    \item \textbf{Counter or resource-budget policies} constrain cumulative quantities along the
    trace, such as exposure limits, query budgets, rate limits, or liquidity constraints. These
    policies are trace-dependent because the permissibility of the next action depends on
    accumulated history.

    \item \textbf{Temporal or history-dependent policies} impose ordering requirements, such as
    approval-before-execution, deletion after a retention window, mandatory escalation after
    unresolved evidence, or restrictions on reusing data for a new purpose.

    \item \textbf{Information-flow policies} govern access, source authorization, personally identifiable information
(PII), purpose limitation, logging, retention, and downstream release, following the broader
language-based information-flow tradition~\citep{sabelfeld2003informationflow}.
\end{itemize}

\textbf{Proof memory.}
Proof memory stores the history needed to certify these policies: prior actions, query counts,
retrieved sources, evidence, computed values, approvals, policy versions, access decisions, and
audit records. A certifier cannot enforce a five-query budget unless it remembers the relevant query
count; it cannot certify evidence binding unless it remembers sources, tags, periods, units, and
computations; and it cannot support later audit unless it records the policy version under which the
trace was certified.

\textbf{Worked certificate example.}
Consider an agent that produces a financial claim from a filing. A certificate for the trace may
contain proof objects
\[
\pi =
(\pi_{\mathrm{access}},
 \pi_{\mathrm{source}},
 \pi_{\mathrm{tag}},
 \pi_{\mathrm{period}},
 \pi_{\mathrm{unit}},
 \pi_{\mathrm{compute}},
 \pi_{\mathrm{claim}}).
\]
These components certify, respectively, that the source was authorized, the filing was retrieved,
the relevant data tag was used, the period was aligned, units were normalized, the computation was
replayable, and the final claim was supported by the derivation trace. The certificate is not a
natural-language explanation; it is a structured witness that the trace satisfies policy obligations. 

In financial reporting, XBRL/iXBRL provides a natural evidence layer for such certificates:
reported facts are associated with machine-readable concepts, reporting contexts, units, and filings~\citep{xbrl_standard,xbrl_ixbrl}. This connects directly to recent work on XBRL-based financial report analysis. XBRL-Agent shows that LLM agents augmented with retrievers and
calculators still face reliability challenges in XBRL term understanding and financial
calculation~\citep{han2024xbrlagent}. From the certified-trace perspective, the remedy is not only better retrieval or calculation, but a certificate binding the generated claim to authorized sources, XBRL concepts, reporting periods, units, replayable computations, and the final statement.

\paragraph{Certificate stack.}
In high-stakes deployments, a certificate should not be viewed as a single monolithic proof.
Rather, it is often a stack of proof objects, each witnessing a different policy obligation. We write
\[
\pi =
(\pi_{\mathrm{id}},
 \pi_{\mathrm{auth}},
 \pi_{\mathrm{source}},
 \pi_{\mathrm{policy}},
 \pi_{\mathrm{risk}},
 \pi_{\mathrm{compute}},
 \pi_{\mathrm{privacy}},
 \pi_{\mathrm{human}},
 \pi_{\mathrm{lineage}}).
\]
Here $\pi_{\mathrm{id}}$ identifies the agent, user, principal, or institutional role on whose behalf
the trace is proposed; $\pi_{\mathrm{auth}}$ certifies scoped authority or credentials;
$\pi_{\mathrm{source}}$ certifies source access and evidence provenance; $\pi_{\mathrm{policy}}$
records the applicable policy clauses; $\pi_{\mathrm{risk}}$ certifies risk limits such as exposure,
query budgets, payment thresholds, or trading constraints; $\pi_{\mathrm{compute}}$ certifies
replayable computation, including formulas, units, periods, and code; $\pi_{\mathrm{privacy}}$
certifies access, retention, release, and information-flow obligations; $\pi_{\mathrm{human}}$
records human approval or escalation when required; and $\pi_{\mathrm{lineage}}$ records the
policy version, certificate version, timestamp, and audit linkage.

Not every trace requires every component. The certificate stack is a design interface: the policy
system determines which components are mandatory for a given trace, while proof memory determines
which components can be produced, checked, or replayed.

\textbf{Expressiveness and capacity.}
Proof memory has both qualitative and quantitative aspects. \emph{Proof-memory expressiveness}
asks which trace properties or trace languages can be certified with a given memory model.
\emph{Proof-memory capacity} asks how many proof-relevant histories the certifier can distinguish
under resource constraints. The basic principle is
\[
\text{a certifier can only certify trace properties that its proof memory can distinguish.}
\]
Thus proof memory determines the resolution of the certification boundary.

\section{Certification Boundary and Evaluation Measures}
\label{sec:certification-boundary}

The framework above turns trustworthy agency into a measurable boundary problem. A generator
induces a distribution over proposed traces; the policy system defines which of those traces are
semantically permissible; the certifier determines which traces cross the certification boundary. This
section introduces the basic quantities needed to study that boundary.

\textbf{Trace universe and generator distribution.}
Let $\mu_G$ denote the generator-induced distribution over traces. We view $\mu_G$ as a
distribution over $\Sigma^*$ supported on generated traces, so generated behavior is measured
relative to the agent's own proposal distribution.

\textbf{Generated permissibility risk.}
The \emph{generated permissibility risk} is the probability that the generator proposes an
impermissible trace:
\[
\mathrm{Gap}_\Pi(M_G)
=
\Pr_{\tau\sim\mu_G}[\tau\notin L_\Pi].
\]
This quantity measures the semantic gap between generation and permissibility.

\textbf{Certified-but-impermissible risk.}
The \emph{certified-but-impermissible risk} is
\[
u_\Pi(M_G,M_\Pi)
=
\Pr_{\tau\sim\mu_G}
\bigl[
\tau\in L_{\mathrm{cert}}(M_\Pi)\setminus L_\Pi
\bigr].
\]
This is the unsafe-admission event: the certifier admits a trace that is in fact impermissible.

\textbf{Missed-certification risk.}
The \emph{missed-certification risk} is
\[
m_\Pi(M_G,M_\Pi)
=
\Pr_{\tau\sim\mu_G}
\bigl[
\tau\in L_\Pi\setminus L_{\mathrm{cert}}(M_\Pi)
\bigr].
\]
This is the incompleteness event: a semantically permissible trace fails to be certified.

\textbf{Executable yield.}
The \emph{executable yield} is the probability mass of certified traces:
\[
y_\Pi(M_G,M_\Pi)
=
\Pr_{\tau\sim\mu_G}
\bigl[
\tau\in L_{\mathrm{cert}}(M_\Pi)
\bigr].
\]
Because $\mu_G$ is generator-induced, this is precisely the fraction of generated trace mass that
crosses the certification boundary.

\begin{theorem}[Certification Boundary Decomposition]
\label{thm:boundary-decomposition}
For any policy system $\Pi$, generator $M_G$, and certifier $M_\Pi$, we have 
\(
y_\Pi
=
1-\mathrm{Gap}_\Pi-m_\Pi+u_\Pi.
\)
If the certifier is sound, then $u_\Pi=0$ and therefore
\(
y_\Pi
=
1-\mathrm{Gap}_\Pi-m_\Pi.
\)
\end{theorem}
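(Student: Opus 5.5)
The plan is to prove the identity by partitioning the trace universe according to membership in the two sets $L_\Pi$ and $C:=L_{\mathrm{cert}}(M_\Pi)$, and then using additivity of the probability measure $\mu_G$. Concretely, $\Sigma^*$ splits into four disjoint events:
\[
A_1=L_\Pi\cap C,\quad A_2=L_\Pi\setminus C,\quad A_3=C\setminus L_\Pi,\quad A_4=\Sigma^*\setminus(L_\Pi\cup C).
\]
By definition, $m_\Pi=\mu_G(A_2)$, $u_\Pi=\mu_G(A_3)$ and $y_\Pi=\mu_G(A_1)+\mu_G(A_3)$. Because $L_\Pi$ is the disjoint union of $A_1$ and $A_2$, we have $1-\mathrm{Gap}_\Pi=\Pr_{\tau\sim\mu_G}[\tau\in L_\Pi]=\mu_G(A_1)+\mu_G(A_2)$.

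The key step is then a single computation. Write
\[
y_\Pi=\mu_G(A_1)+\mu_G(A_3)=\bigl(\mu_G(A_1)+\mu_G(A_2)\bigr)-\mu_G(A_2)+\mu_G(A_3).
\]
Substituting the expressions above gives $y_\Pi=(1-\mathrm{Gap}_\Pi)-m_\Pi+u_\Pi$. This holds for any $\Pi$, $M_G$ and $M_\Pi$, with no assumption on the certifier.

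For the second claim, I will invoke soundness (Axiom 2), $C\subseteq L_\Pi$. This makes $A_3=\emptyset$, so $u_\Pi=\mu_G(\emptyset)=0$, and the reduced identity follows at once.

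The only point needing care is measurability. The events $\{\tau\in L_\Pi\}$ and $\{\tau\in C\}$ must be measurable under $\mu_G$. Since $\Sigma^*$ is countable, I will take the discrete $\sigma$-algebra, so every subset is an event and finite additivity applies directly. Note also that the argument never uses that $\mu_G$ is supported on $L_G$; the identity holds for any probability measure on $\Sigma^*$. I do not expect any genuine obstacle; the work is bookkeeping in the four-cell partition.
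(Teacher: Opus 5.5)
Your proposal is correct and follows essentially the same route as the paper. The paper splits $L_{\mathrm{cert}}$ into $L_{\mathrm{cert}}\cap L_\Pi$ and $L_{\mathrm{cert}}\setminus L_\Pi$ and writes $\Pr[\tau\in L_{\mathrm{cert}}\cap L_\Pi]=(1-\mathrm{Gap}_\Pi)-m_\Pi$, which is your bookkeeping with $A_1,A_2,A_3$; soundness gives $u_\Pi=0$ in the same way. Your remarks on the discrete $\sigma$-algebra (valid for a countable alphabet) and on never using the support of $\mu_G$ on $L_G$ are harmless additions that the paper leaves implicit.
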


\textbf{Interpretation.}
The theorem is a conservation law for generated trace mass. Relative to the generator distribution,
a trace is either permissible or impermissible. Among permissible traces, some are certified and
some are missed. Among impermissible traces, some are correctly rejected and some are wrongly
certified. The executable yield is therefore the permissible generated mass, minus missed
certification, plus unsafe certification.

\textbf{Boundary distance.}
A natural measure of the distance between the operational certification boundary and the semantic
permissible boundary is
\[
d_\mu(M_\Pi,\Pi)
=
\mu_G\!\left(L_{\mathrm{cert}}(M_\Pi)\triangle L_\Pi\right)
=
u_\Pi(M_G,M_\Pi)+m_\Pi(M_G,M_\Pi),
\]
where $\triangle$ denotes symmetric difference. This quantity makes explicit that certification
quality is a boundary-approximation problem: the certifier is good when its certified region closely
matches the true permissible region under the generator's trace distribution.

\textbf{Monitorability gap versus certification gap.}
Let $O(\tau)$ denote the observable part of a trace available to a monitor, such as outputs,
actions, tool logs, activations, or chain-of-thought. A monitorability gap arises when
$O(\tau)$ does not contain enough policy-relevant information to detect whether
$\tau\notin L_\Pi$. A certification gap arises when the available observations and proof memory
do not contain enough checkable evidence to produce a certificate, even if the trace is semantically
permissible. Thus monitorability and certifiability are related but distinct: monitorability concerns
detection from observations, while certifiability concerns checkable permission for execution.

\begin{figure}[t]
    \centering
    \includegraphics[width=0.70\textwidth]{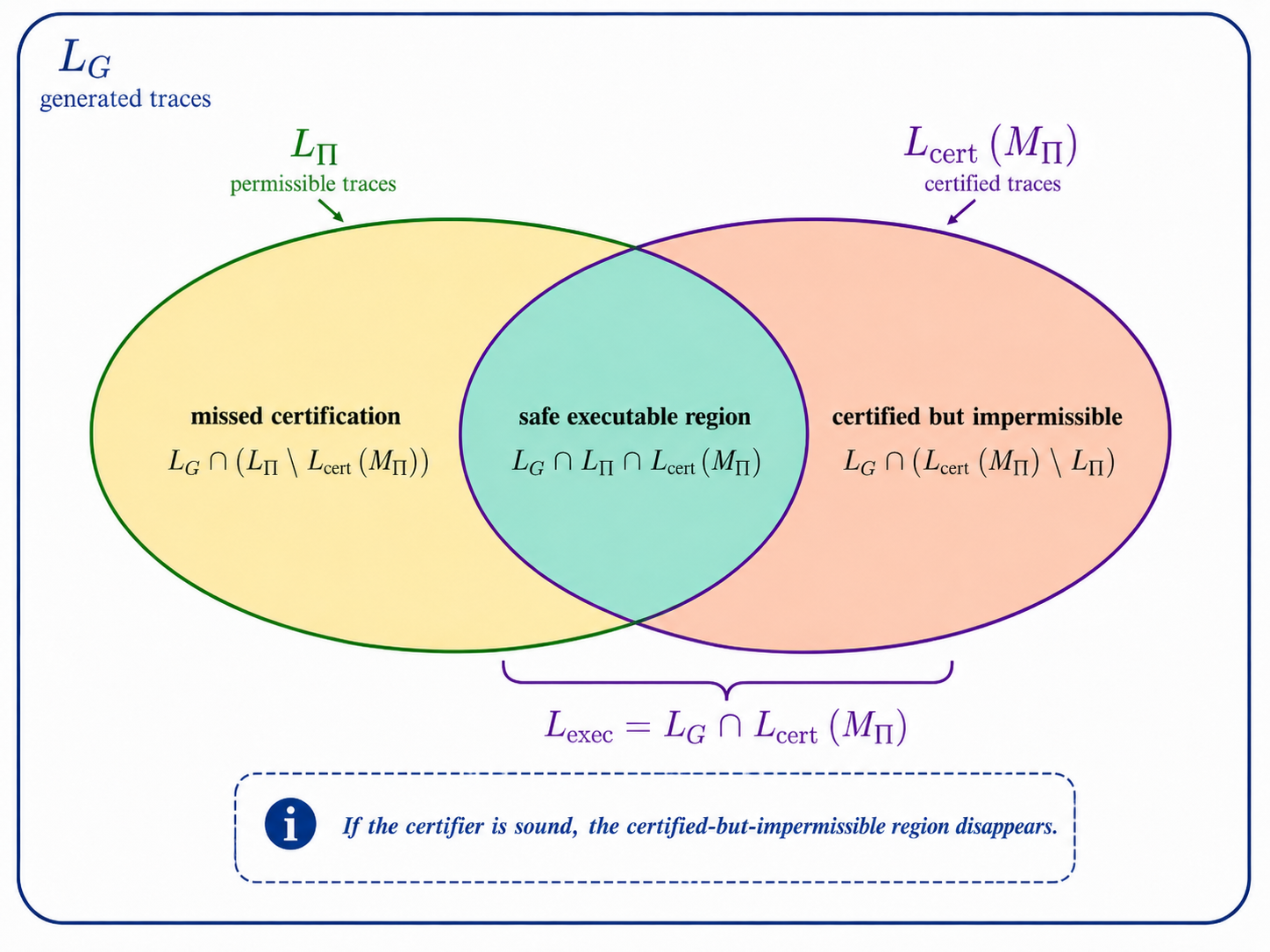}
    \caption{Certification boundary. Generated traces $L_G$ are partitioned by semantic permissibility
$L_\Pi$ and operational certification $L_{\mathrm{cert}}(M_\Pi)$. The executable region is
$L_{\mathrm{exec}}=L_G\cap L_{\mathrm{cert}}(M_\Pi)$; the two boundary errors are missed
certification, $L_G\cap(L_\Pi\setminus L_{\mathrm{cert}}(M_\Pi))$, and certified-but-impermissible
risk, $L_G\cap(L_{\mathrm{cert}}(M_\Pi)\setminus L_\Pi)$.}
\label{fig:certification-boundary}
    \vspace{-0.1in}
\end{figure}

Fig.~\ref{fig:certification-boundary} visualizes the trace-level certification boundary induced by
semantic permissibility and operational certification. The outer region $L_G$ denotes generated
traces. The green set $L_\Pi$ denotes traces that are semantically permissible under the policy
system, while the purple set $L_{\mathrm{cert}}(M_\Pi)$ denotes traces certified by the Permissibility
Machine. Their overlap inside $L_G$ is the safe executable region,
$L_G\cap L_\Pi\cap L_{\mathrm{cert}}(M_\Pi)$. The left-only region
$L_G\cap (L_\Pi\setminus L_{\mathrm{cert}}(M_\Pi))$ corresponds to missed certification:
permissible traces that the certifier fails to certify. The right-only region
$L_G\cap (L_{\mathrm{cert}}(M_\Pi)\setminus L_\Pi)$ corresponds to certified-but-impermissible
risk: traces certified by the system but not actually permissible. The executable region is
$L_{\mathrm{exec}}=L_G\cap L_{\mathrm{cert}}(M_\Pi)$; if the certifier is sound, then
$L_{\mathrm{cert}}(M_\Pi)\subseteq L_\Pi$, and the certified-but-impermissible region disappears.

\section{Privacy and Zero-Knowledge Trace Certification}
\label{sec:privacy-zk}

Privacy is one of the clearest examples of trace-dependent permissibility, connecting certified
traces to information-flow security and privacy-preserving verification~\citep{
sabelfeld2003informationflow}.  A final output may
contain no personally identifiable information (PII), while the trace that produced it may have
accessed, stored, logged, embedded, or leaked sensitive information. Thus privacy compliance
cannot be certified by output inspection alone:
\[
\text{Clean Output} \not\Rightarrow \tau\in L_{\Pi_{\mathrm{priv}}}.
\]

\textbf{Privacy as a trace property.}
Let $\Pi_{\mathrm{priv}}$ denote a privacy policy that may
constrain which user or agent may access which data, for what purpose, through which component,
and under which logging, retention, and release conditions. These obligations concern the information-flow trace: prompts, retrievals, tool calls,
memory updates, logs, caches, vector stores, and released outputs.

\textbf{Query budgets.}
A simple example is a query-budget policy. One query to a protected resource may be
authorized, while repeated queries within a time window may violate privacy or access-control
constraints. Let
\(
N_\tau(u,R,W)
\)
denote the number of queries by user or agent $u$ to protected resource $R$ within window $W$
along trace $\tau$. A privacy policy may require
\(
N_\tau(u,R,W)\le B.
\)
Each individual query may be permissible, while the full trace becomes impermissible once the
budget is exceeded. 

\textbf{Privacy--observability tension.}
Certification requires evidence about the trace, but the trace itself may contain sensitive data.
A certifier may need to know whether a protected source was accessed, whether PII flowed to an
unauthorized component, whether a query budget was exceeded, or whether an output used
approved data. Revealing the raw trace to prove these facts may itself violate privacy.

\textbf{Zero-knowledge certificates.}
This motivates privacy-preserving certification. A zero-knowledge certificate can prove selected
properties of an execution trace without revealing the sensitive trace itself. For example, a
certificate may establish that
\(
\tau\in L_{\Pi_{\mathrm{priv}}}
\)
without exposing the underlying prompts, private records, retrieved documents, model inputs, or
intermediate computations.

\textbf{Certifiable private trace properties.}
Privacy-relevant certificates may establish that all source accesses were authorized, no PII was
released to an unauthorized component, a query budget was not exceeded, a computation used
approved data without revealing the raw data, or an agent satisfied a policy predicate without
exposing private prompts or records.

\textbf{Interpretation.}
Privacy creates a fundamental tension for trustworthy agents: trace certification needs evidence,
but privacy limits evidence exposure. Zero-knowledge trace certificates address this tension by
separating the property being certified from the sensitive data used to certify it. This motivates
privacy-preserving proof memory as a core component of certified AI-agent execution.

\section{Limitations, Failure Modes, and Open Problems}
\label{sec:objections-agenda}

Certified traces define a formal execution-governance framework, not a completed verification stack.
We close by addressing alternative views, limitations, and open problems for trustworthy AI agents.

\paragraph{Objection 1: Stronger models will solve this.}
Stronger generative models may reduce factual errors, hallucinations, and tool-use mistakes, but
they do not by themselves create permission. Capability is not authority. A model may correctly
infer an action or retrieve a useful source while the trace violates access control, privacy, mandate,
evidence, or execution policy. 

\paragraph{Objection 2: Output guardrails are enough.}
Output guardrails are useful but incomplete. Many violations occur inside the trace: unauthorized
retrieval, leakage through prompts or logs, unreplayable computation, cumulative exposure,
purpose misuse, or policy-history violations. Trace certification asks a stronger question: not only
whether the output looks acceptable, but whether the process that produced it satisfies the policy
system.

\paragraph{Objection 3: Certification is too expensive.}
Full certification of every trace may be infeasible or unnecessary. A practical path is to certify
high-stakes traces first: actions that access sensitive data, modify records, generate regulated
recommendations, transfer assets, or trigger external execution. Certification can be made
incremental through reusable proof memory, partial certificates, domain-specific certifiers, and
escalation when evidence is incomplete. Human review is therefore a design option, not the definition of safety. The core requirement is
that high-risk execution be supported by checkable authorization evidence; such evidence
may include policy-engine checks, scoped credentials, cryptographic proofs, or human approvals.  Risk-tiered certification is the operational response to this objection: certificate burden should scale
with execution risk, irreversibility, regulatory exposure, privacy sensitivity, and systemic impact.


\textbf{Limitations.}
Certified traces do not solve value disagreement, policy design, institutional legitimacy, or
adversarial policy capture. A harmful, incomplete, or mis-specified policy can still certify harmful
behavior. Trace logging can also create privacy and surveillance risks if implemented naively.
Certification must therefore be paired with policy governance, privacy-preserving proof memory,
access control, and escalation. This also reinforces the need for independent evaluation and red teaming of agentic systems, an issue
emphasized by recent ICML position work on safe harbor for AI evaluation~\citep{longpre2024safeharbor}.

Table~\ref{tab:failure_modes} summarizes common failure modes for certified traces
and the corresponding design responses.

\begin{table}[t]
\centering
\small
\begin{tabular}{p{0.2\linewidth}p{0.32\linewidth}p{0.36\linewidth}}
\toprule
\textbf{Failure mode} & \textbf{Example} & \textbf{Certified-trace response} \\
\midrule
Fake certificate &
The agent fabricates an approval sentence or cites a nonexistent policy check. &
Certificates must be checkable proof objects, not natural-language self-attestations. \\

Stale policy &
A trace is certified under an old mandate, access rule, or regulatory policy. &
Certificates must bind to policy lineage: version, authority, timestamp, and effective date. \\

Stale approval &
A human approved an earlier trace, but the agent modifies the beneficiary, amount, source, or action. &
Approvals should bind to a trace hash, scope, time window, and execution conditions. \\

Tool deviation &
A tool returns a different result or executes a different call than the certified proposed trace. &
Realized-trace conformance must be checked; deviations halt, roll back, or escalate. \\

Hidden policy state &
The agent cannot see a restricted list, query budget, client mandate, or exposure limit. &
Proof memory and policy-state interfaces must expose the facts needed for certification. \\

Certificate laundering &
A weak certifier approves a high-risk trace outside its authority. &
Risk-tiered certification should specify which certifier class is authorized for each trace type. \\

Privacy leakage &
The raw trace needed for certification contains sensitive client data or private prompts. &
Use selective disclosure, commitments, access controls, or zero-knowledge trace certificates. \\
\bottomrule
\end{tabular}
\caption{Failure modes for certified traces and corresponding design responses.}
\label{tab:failure_modes}
\end{table}

\paragraph{Supervisory technology: Execution Certificates for Financial AI Agents.}
In regulated finance, certified traces can be instantiated as a supervisory technology. A
\emph{Financial AI Agent Execution Certificate} is a machine-readable record attached to a
consequential agent action, recommendation, report, payment instruction, trade proposal, or
compliance decision. Such a certificate may include:
\[
\pi_{\mathrm{fin}} =
(\pi_{\mathrm{agent}},
 \pi_{\mathrm{principal}},
 \pi_{\mathrm{purpose}},
 \pi_{\mathrm{source}},
 \pi_{\mathrm{policy}},
 \pi_{\mathrm{risk}},
 \pi_{\mathrm{approval}},
 \pi_{\mathrm{compute}},
 \pi_{\mathrm{privacy}},
 \pi_{\mathrm{conform}}).
\]
Here $\pi_{\mathrm{agent}}$ identifies the AI agent and system version; $\pi_{\mathrm{principal}}$
identifies the human, desk, client, fund, or institution on whose behalf the agent acts;
$\pi_{\mathrm{purpose}}$ records the business purpose and permitted scope; $\pi_{\mathrm{source}}$
records source access and evidence lineage; $\pi_{\mathrm{policy}}$ records applicable regulations,
firm policies, client mandates, and policy versions; $\pi_{\mathrm{risk}}$ records risk-limit,
restricted-list, exposure, liquidity, or payment-threshold checks; $\pi_{\mathrm{approval}}$ records
human approval or escalation when required; $\pi_{\mathrm{compute}}$ records replayable formulas,
code, units, periods, and computed values; $\pi_{\mathrm{privacy}}$ records privacy, retention, and
information-flow checks; and $\pi_{\mathrm{conform}}$ records whether the realized trace conformed
to the certified proposed trace.

This certificate is not intended to replace existing regulation. It is a technical object that can make
existing supervisory obligations auditable for agentic systems. In this sense, certified traces provide
a bridge from AI-agent architecture to financial supervision: regulators and firms can ask not only
what an agent output, but what certificate authorized the action.

\textbf{Research agenda.}
Certified traces suggest several directions for the machine learning community:
\begin{itemize}[leftmargin=*, label=$\bullet$]
    \item \textbf{Certifiable trace languages.} Which classes of agent traces admit efficient sound
    certification?
    \item \textbf{Proof-memory capacity and expressiveness.} What proof-relevant history is needed
    to certify cumulative, temporal, privacy, or information-flow policies?
    \item \textbf{Certification-boundary benchmarks.} How should agents be evaluated by generated
    impermissibility, certified-but-impermissible risk, missed certification, and executable yield?
    \item \textbf{Monitorability versus certification gaps.} What information is visible to a monitor,
    what is retained in proof memory, and what is sufficient to authorize execution?
    \item \textbf{Privacy-preserving trace certificates.} How can systems prove policy compliance
    without exposing sensitive traces, prompts, records, or computations?
    \item \textbf{Bounded certification.} What tradeoffs arise among proof memory, latency, privacy,
    verification cost, and certification error?
    \item \textbf{Policy drift and re-certification.} How should certificates be maintained when laws,
    mandates, access rules, or institutional policies change?
    \item \textbf{Compositional certification.} When do certificates for tools, agents, or subtasks
    compose into certificates for multi-agent traces?
\end{itemize}

\textbf{Conclusion.}
Trustworthy AI agents require more than generation. They require a theory of permissible
execution. The central object is not the model output alone, but the certificate-bearing trace:
a generator proposes, a Permissibility Machine certifies, and an executor acts only after
certification. No certificate, no execution.

\section*{Acknowledgments}

Xiao-Yang Liu Yanglet and Xiaodong Wang acknowledge the support from Columbia's SIRS and STAR Program, and The Tang Family Fund for Research Innovations in FinTech, Engineering, and Business Operations. Xiao-Yang Liu Yanglet acknowledges the support from NSF IUCRC CRAFT center research grant (CRAFT Grant 22017) for this research. The opinions expressed in this publication do not necessarily represent the views of NSF IUCRC CRAFT.

Xiao-Yang Liu Yanglet and Xiaodong Wang acknowledge the JPMorganChase Faculty
Research Award. Any views or opinions expressed herein are solely those of the authors
listed and may differ from the views and opinions expressed by JPMorganChase or its
affiliates. This material is not a product of the Research Department of J.P. Morgan
Securities LLC. This material should not be construed as an individual recommendation
for any particular client and is not intended as a recommendation of particular securities,
financial instruments, or strategies for any particular client. This material does not
constitute a solicitation or offer in any jurisdiction.

\bibliographystyle{unsrt}
\bibliography{ref, Refv2}

\appendix

\section{Formal Trace Model}
\label{app:formal-trace-model}

This appendix gives the formal trace model underlying the main text. Its purpose is to
make explicit the objects, languages, and assumptions used by the proposal--certification--execution architecture.

\paragraph{Trace universe.}
Let $\Sigma$ be an alphabet used to encode states, actions, tool calls, retrieval events, evidence
records, computation steps, memory updates, and execution events. A finite agent trace is represented
as a word
\[
\tau\in\Sigma^*.
\]
When the context is clear, we write a structured state--action trace as
\[
\tau=(s_0,a_1,s_1,\ldots,a_T,s_T),
\]
where $s_t\in S$ denotes the system state at time $t$, $a_t\in C$ denotes a candidate action, and
$T<\infty$ is the trace horizon.

\paragraph{Proposed and realized traces.}
We distinguish a proposed trace from a realized trace. A proposed trace
$\tau^{\mathrm{prop}}\in\Sigma^*$ is a structured pre-execution record of intended actions, tool
calls, evidence, approvals, credentials, computations, and execution conditions. This is the object
submitted to the Permissibility Machine for certification. A realized trace
$\tau^{\mathrm{real}}\in\Sigma^*$ records the events actually performed after execution. Certification
authorizes execution of the proposed trace, while auditing checks whether the realized trace conforms to
the certified proposed trace. For long-running workflows, certification may occur repeatedly on
trace prefixes before irreversible actions:
\[
\tau^{\mathrm{prop}}_{1:t}
\quad\longrightarrow\quad
M_\Pi(\tau^{\mathrm{prop}}_{1:t},\Pi)
\quad\longrightarrow\quad
\{\text{authorize, reject, or escalate}\}.
\]

\paragraph{Transition model.}
A trace may follow a transition rule
\[
s_t=\delta(s_{t-1},a_t),\qquad 1\le t\le T,
\]
where $\delta:S\times C\to S$ is an abstract transition map. The deterministic notation is used
only for simplicity. Randomness, tool outputs, external observations, and environmental events can
be encoded into the state.

\paragraph{Action-level objects.}
The computable action set $C$ denotes the candidate actions that the agent can represent, generate,
or submit. The action-level permissible set
\[
P\subseteq C
\]
contains actions that are permitted when considered in isolation. The action-level permissibility gap is
\[
C\setminus P.
\]
This action-level distinction is useful but insufficient: a trace may be impermissible even when all
of its individual actions lie in $P$.

Fig.~\ref{fig:action-level-geometry} gives the action-level geometry of the
separation principle. The set \(C\) is the operative universe of computable
candidate actions, while \(P\subsetneq C\) is the subset of actions permitted at
the action level. The region \(C\setminus P\) is the permissibility gap: actions
that may be generated or submitted by a computational system, but should not be
permitted for execution.

\begin{figure}[t]
    \centering
    \includegraphics[width=0.75\textwidth]{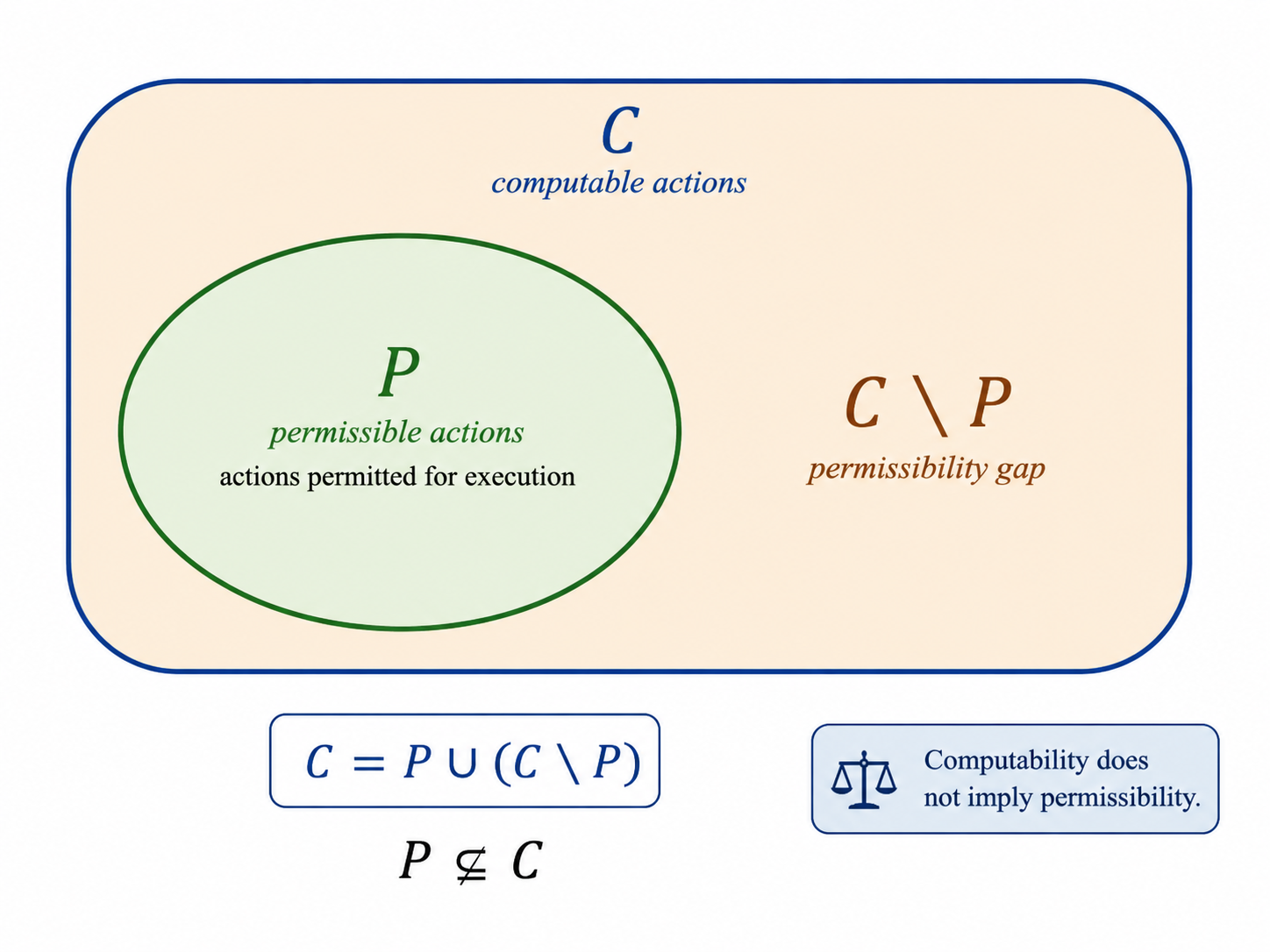} 
    \caption{
\textbf{Action-level geometry of the permissibility gap.}
The computable action set \(C\) is the operative universe of candidate actions
available to the system. The permissible action set \(P\subseteq C\) contains
actions permitted at the action level. The region \(C\setminus P\) is the
\emph{permissibility gap}: actions that are available to the system but not
permitted. In the strict-separation setting considered here, \(P\subsetneq C\),
so the gap is nonempty.
}
\label{fig:action-level-geometry}
\end{figure}

\paragraph{Trace-level languages.}
The generated trace language is
\[
L_G:=L(M_G)\subseteq \Sigma^*,
\]
where $M_G$ is the generating machine. The policy system $\Pi$ induces a semantic permissible
trace language
\[
L_\Pi\subseteq\Sigma^*.
\]
The certified trace language is
\[
L_{\mathrm{cert}}(M_\Pi)
=
\{\tau\in\Sigma^*:M_\Pi(\tau,\Pi)=\pi\},
\]
where $M_\Pi$ is the Permissibility Machine and $\pi$ is a checkable certificate. The executable
trace language is
\[
L_{\mathrm{exec}}
=
L_G\cap L_{\mathrm{cert}}(M_\Pi).
\]

\paragraph{Semantic versus operational regions.}
The language $L_\Pi$ is semantic: it represents the traces that are permissible under the policy
system. The language $L_{\mathrm{cert}}(M_\Pi)$ is operational: it represents the traces that the
certifier can certify. Sound certification requires
\[
L_{\mathrm{cert}}(M_\Pi)\subseteq L_\Pi.
\]
Completeness is not required. In realistic settings, some permissible traces may be rejected or
escalated because the certifier lacks sufficient proof memory, evidence, time, or authority.

\begin{table}[t]
\centering
\begin{tabular}{ll}
\hline\hline
Symbol & Meaning \\
\hline
$S$ & System state space \\
$C$ & Computable action set \\
$P$   & Permissible action set (action level) \\
$C\setminus P$ & Permissibility gap \\
$\Sigma$ & Alphabet of primitive trace symbols \\
$\tau$ & Finite agent trace \\
$\Pi$ & Policy system (permissibility calculus) \\
$\pi$ & Certificate \\
$M_G$ & Generating machine \\
$M_\Pi$ & Certifying machine (Permissibility Machine) \\
$E$ & External execution system \\
\hline
\end{tabular}
\caption{Notations for primitive objects.}
\label{tab:primitive-notation}
\end{table}

\section{Proofs of Main Results}
\label{app:proofs}

This appendix contains proofs omitted or abbreviated in the main text.

\subsection{Proof of Proposition~\ref{prop:trace-noncompositionality}}

\paragraph{Restatement.}
There exist policy systems and traces whose individual actions are locally permissible, while the
full trace is impermissible under the policy system.

\begin{proof}
We give a concrete construction using a query-budget policy. Let $R$ be a protected resource, such
as a password-related or identity-related database table. Let each action $a_i$ be a query to $R$ by
an authenticated user or agent $u$. Define the local action-level permissible set $P$ so that each
authenticated query is locally permissible:
\[
a_i\in P.
\]

Now define a trace-level privacy policy $\Pi_{\mathrm{priv}}$ that imposes a query budget. Let
\[
N_\tau(u,R,W)
\]
denote the number of queries by $u$ to resource $R$ within time window $W$ along trace $\tau$.
The policy requires
\[
N_\tau(u,R,W)\le 5.
\]
Construct a trace $\tau$ in which the same authenticated user or agent issues six queries to $R$
within the same time window $W$. Each individual query is locally permissible, so
\[
a_i\in P,\qquad i=1,\ldots,6.
\]
However,
\[
N_\tau(u,R,W)=6>5,
\]
so the full trace violates the privacy policy. Therefore
\[
\tau\notin L_{\Pi_{\mathrm{priv}}},
\]
although every individual action in the trace is locally permissible. Hence local action-level
permissibility need not compose into trace-level permissibility.
\end{proof}

\begin{table}[t]
\centering
\begin{tabular}{ll}
\hline\hline
Notation & Meaning \\
\hline
$L(\cdot)$  & Language generated/recognized by a machine \\
\(L_G := L(M_G)\) & Language generated by the generating machine \\
\(L_\Pi\) & Permissible trace language induced by policy system \(\Pi\) \\
\(L_{\mathrm{cert}}(M_\Pi)\) & Certified language recognized by the certifier \\
\(L_{\mathrm{exec}} := L_G \cap L_{\mathrm{cert}}(M_\Pi)\) & Executable language after certification \\
\(L_{\mathrm{task}}\) & Task language: traces that accomplish a specified task \\
\(\mathcal A=(M_G,M_\Pi,E)\) & Proposal-certification architecture with execution system \\
\(M_G^\Pi\) & Policy-conditioned generator \\
\(\Pi(M_G)\) & Runtime-enforced (sandboxed) generator \\
$\Pi \vdash \tau:\mathrm{Permitted}$ & 
Derivability judgment for permissibility \\
\(\Pi_\ell\) & Layer of a structured policy system \(\Pi\) \\
\(\Pi_{\le k}\) & Partial policy stack through layer \(k\) \\
\(\mathrm{Gap}_\Pi(M_G)\) & Generated permissibility risk \\
\hline
\(u_\Pi(M_G,M_\Pi)\) & Certified-but-impermissible risk of the certifier \\
\(m_\Pi(M_G,M_\Pi)\) & Missed-certification risk of the certifier \\
\(y_\Pi(M_G,M_\Pi)\) & Executable yield after certification \\
\(K_\Pi(\tau)\) & Certificate complexity or proof burden of trace \(\tau\) \\
\(D_\mu(t,t')\) & Policy-lineage drift between policy systems \(\Pi_t\) and \(\Pi_{t'}\) \\
\(\mathsf P_{\mathrm{perm}}, \mathsf{NP}_{\mathrm{perm}}\)
& Permissibility search--verification complexity classes \\
\hline
\end{tabular}
\caption{Derived languages and architecture-level notations.}
\label{tab:derived-notation}
\end{table}

\subsection{Proof of Theorem~\ref{thm:execution-safety}}

\paragraph{Restatement.}
Under sound certification and execution authorization, every executed trace is permissible:
\[
\mathrm{Execute}(\tau)\Longrightarrow \tau\in L_\Pi.
\]

\begin{proof}
By execution authorization,
\[
\mathrm{Execute}(\tau)
\Longrightarrow
\tau\in L_{\mathrm{cert}}(M_\Pi).
\]
By sound certification,
\[
L_{\mathrm{cert}}(M_\Pi)\subseteq L_\Pi.
\]
Therefore $\tau\in L_\Pi$.
\end{proof}

\subsection{Proof of Theorem~\ref{thm:exec-language-composition}}

\paragraph{Restatement.}
For a fixed policy system $\Pi$ and architecture
\[
\mathcal A=(M_G,M_\Pi,E),
\]
a trace is executable if and only if it is both generated and certified. Equivalently,
\[
L_{\mathrm{exec}}(\mathcal A)
=
L_G\cap L_{\mathrm{cert}}(M_\Pi).
\]

\begin{proof}
If a trace is executable under $\mathcal A$, then it must have been proposed by the generating
machine. Hence $\tau\in L_G$. By execution authorization, the trace must also be certified before
execution, so $\tau\in L_{\mathrm{cert}}(M_\Pi)$. Therefore
\[
\tau\in L_G\cap L_{\mathrm{cert}}(M_\Pi).
\]

Conversely, suppose
\[
\tau\in L_G\cap L_{\mathrm{cert}}(M_\Pi).
\]
Then $\tau$ is a generated trace and certification succeeds. Under the architecture, generated traces
are authorized for execution only after certification succeeds. Hence $\tau$ is executable under
$\mathcal A$. Therefore
\[
L_{\mathrm{exec}}(\mathcal A)
=
L_G\cap L_{\mathrm{cert}}(M_\Pi).
\]
\end{proof}

\subsection{Proof of Theorem~\ref{thm:boundary-decomposition}}

Recall the definitions
\[
\mathrm{Gap}_\Pi(M_G)
=
\Pr_{\tau\sim\mu_G}[\tau\notin L_\Pi],
\]
\[
u_\Pi(M_G,M_\Pi)
=
\Pr_{\tau\sim\mu_G}
[
\tau\in L_{\mathrm{cert}}(M_\Pi)\setminus L_\Pi
],
\]
\[
m_\Pi(M_G,M_\Pi)
=
\Pr_{\tau\sim\mu_G}
[
\tau\in L_\Pi\setminus L_{\mathrm{cert}}(M_\Pi)
],
\]
and
\[
y_\Pi(M_G,M_\Pi)
=
\Pr_{\tau\sim\mu_G}
[
\tau\in L_{\mathrm{cert}}(M_\Pi)
].
\]

\paragraph{Restatement.}
For any policy system $\Pi$, generator $M_G$, and certifier $M_\Pi$,
\[
y_\Pi
=
1-\mathrm{Gap}_\Pi-m_\Pi+u_\Pi.
\]
If the certifier is sound, then $u_\Pi=0$ and
\[
y_\Pi
=
1-\mathrm{Gap}_\Pi-m_\Pi.
\]

\begin{proof}
For readability, we write
\[
L_{\mathrm{cert}}:=L_{\mathrm{cert}}(M_\Pi).
\]
The certified language decomposes into its permissible and impermissible parts:
\[
L_{\mathrm{cert}}
=
(L_{\mathrm{cert}}\cap L_\Pi)
\cup
(L_{\mathrm{cert}}\setminus L_\Pi),
\]
and the union is disjoint. Therefore
\[
y_\Pi
=
\Pr_{\tau\sim\mu_G}[\tau\in L_{\mathrm{cert}}\cap L_\Pi]
+
\Pr_{\tau\sim\mu_G}[\tau\in L_{\mathrm{cert}}\setminus L_\Pi].
\]
The second term is $u_\Pi$. For the first term,
\[
\Pr_{\tau\sim\mu_G}[\tau\in L_{\mathrm{cert}}\cap L_\Pi]
=
\Pr_{\tau\sim\mu_G}[\tau\in L_\Pi]
-
\Pr_{\tau\sim\mu_G}[\tau\in L_\Pi\setminus L_{\mathrm{cert}}].
\]
By definition,
\[
\Pr_{\tau\sim\mu_G}[\tau\in L_\Pi]
=
1-\mathrm{Gap}_\Pi,
\]
and
\[
\Pr_{\tau\sim\mu_G}[\tau\in L_\Pi\setminus L_{\mathrm{cert}}]
=
m_\Pi.
\]
Combining these identities gives
\[
y_\Pi
=
1-\mathrm{Gap}_\Pi-m_\Pi+u_\Pi.
\]
If the certifier is sound, then
\[
L_{\mathrm{cert}}(M_\Pi)\subseteq L_\Pi,
\]
so
\[
L_{\mathrm{cert}}(M_\Pi)\setminus L_\Pi=\emptyset
\]
and hence $u_\Pi=0$. The sound-certifier identity follows.
\end{proof}

\subsection{Residual Execution Risk}

\paragraph{Definition.}
Assume $y_\Pi(M_G,M_\Pi)>0$. The residual execution risk is the probability that an executed
trace is impermissible:
\[
\rho_\Pi(M_G,M_\Pi)
=
\Pr_{\tau\sim\mu_G}
[
\tau\notin L_\Pi
\mid
\tau\in L_{\mathrm{cert}}(M_\Pi)
].
\]

\paragraph{Claim.}
If $y_\Pi(M_G,M_\Pi)>0$, then
\[
\rho_\Pi(M_G,M_\Pi)
=
\frac{u_\Pi(M_G,M_\Pi)}{y_\Pi(M_G,M_\Pi)}.
\]
In particular, a sound certifier has zero residual execution risk.

\begin{proof}
By conditional probability,
\[
\Pr[\tau\notin L_\Pi\mid \tau\in L_{\mathrm{cert}}(M_\Pi)]
=
\frac{
\Pr[\tau\in L_{\mathrm{cert}}(M_\Pi)\setminus L_\Pi]
}{
\Pr[\tau\in L_{\mathrm{cert}}(M_\Pi)]
}.
\]
The numerator is $u_\Pi(M_G,M_\Pi)$ and the denominator is
$y_\Pi(M_G,M_\Pi)$. If the certifier is sound, then
\[
L_{\mathrm{cert}}(M_\Pi)\subseteq L_\Pi,
\]
so $u_\Pi(M_G,M_\Pi)=0$ and hence $\rho_\Pi(M_G,M_\Pi)=0$.
\end{proof}

\subsection{Permissibility Recall}

\paragraph{Definition.}
Assume $1-\mathrm{Gap}_\Pi(M_G)>0$. The permissibility recall is the probability that a permissible
generated trace is certified:
\[
\mathrm{Recall}_\Pi(M_G,M_\Pi)
=
\Pr_{\tau\sim\mu_G}
[
\tau\in L_{\mathrm{cert}}(M_\Pi)
\mid
\tau\in L_\Pi
].
\]

\paragraph{Claim.}
If $1-\mathrm{Gap}_\Pi(M_G)>0$, then
\[
\mathrm{Recall}_\Pi(M_G,M_\Pi)
=
1-
\frac{m_\Pi(M_G,M_\Pi)}{1-\mathrm{Gap}_\Pi(M_G)}.
\]

\begin{proof}
By conditional probability,
\[
\Pr[\tau\in L_{\mathrm{cert}}(M_\Pi)\mid \tau\in L_\Pi]
=
\frac{
\Pr[\tau\in L_{\mathrm{cert}}(M_\Pi)\cap L_\Pi]
}{
\Pr[\tau\in L_\Pi]
}.
\]
The denominator is $1-\mathrm{Gap}_\Pi(M_G)$. The numerator is
\[
\Pr[\tau\in L_\Pi]
-
\Pr[\tau\in L_\Pi\setminus L_{\mathrm{cert}}(M_\Pi)]
=
1-\mathrm{Gap}_\Pi(M_G)-m_\Pi(M_G,M_\Pi).
\]
Dividing gives the result.
\end{proof}

\subsection{Monotonicity Under Conjunctive Policy Strengthening}

\paragraph{Claim.}
Let
\[
\Pi'=\Pi\wedge \Pi_{\mathrm{new}},
\]
where $\Pi_{\mathrm{new}}$ adds an obligation conjunctively and does not introduce an exception or
override that enlarges the permissible language. Then
\[
L_{\Pi'}\subseteq L_\Pi.
\]
Consequently, for a fixed generator distribution $\mu_G$,
\[
\mathrm{Gap}_{\Pi'}(M_G)\ge \mathrm{Gap}_{\Pi}(M_G).
\]

\begin{proof}
A trace belongs to $L_{\Pi'}$ only if it satisfies both $\Pi$ and $\Pi_{\mathrm{new}}$. Therefore
every trace permissible under $\Pi'$ is permissible under $\Pi$, so
\[
L_{\Pi'}\subseteq L_\Pi.
\]
Taking complements in the common trace universe $\Sigma^*$ reverses the inclusion:
\[
\Sigma^*\setminus L_\Pi
\subseteq
\Sigma^*\setminus L_{\Pi'}.
\]
Equivalently,
\[
\{\tau\in\Sigma^*:\tau\notin L_\Pi\}
\subseteq
\{\tau\in\Sigma^*:\tau\notin L_{\Pi'}\}.
\]
Taking $\mu_G$-probabilities gives
\[
\Pr_{\tau\sim\mu_G}[\tau\notin L_\Pi]
\le
\Pr_{\tau\sim\mu_G}[\tau\notin L_{\Pi'}].
\]
Thus
\[
\mathrm{Gap}_{\Pi'}(M_G)\ge \mathrm{Gap}_{\Pi}(M_G).
\]
\end{proof}

\paragraph{Caveat.}
This monotonicity statement assumes conjunctive strengthening. It need not hold if the new policy
layer introduces exceptions, overrides, or relaxations that enlarge the permissible language.

\section{Canonical Policy Models}
\label{app:canonical-policy-models}

This appendix develops several mathematical models of policy systems. These models are not
intended to be exhaustive. Their purpose is to show how common policy structures induce different
trace-certification problems.

\subsection{Semantic Trace-Language Policy}

The most abstract model represents a policy system only by its induced permissible trace language:
\[
\Pi \equiv L_\Pi\subseteq\Sigma^*.
\]
A trace is permissible when
\[
\tau\in L_\Pi.
\]
This model is sufficient for the axioms of certifiable execution, executable-language composition,
and certification-boundary measures. It does not require a specific syntax for policies.

\subsection{Automaton or Monitor Policy}

Some policies can be represented by finite-state monitors. A monitor policy has the form
\[
\Pi=(Q,q_0,\delta_\Pi,F),
\]
where $Q$ is a finite set of monitor states, $q_0$ is the initial state, $\delta_\Pi:Q\times\Sigma\to Q$
is a transition function, and $F\subseteq Q$ is the set of accepting states. The induced permissible
language is
\[
L_\Pi
=
\{\tau\in\Sigma^*:\delta_\Pi(q_0,\tau)\in F\}.
\]
This model captures workflow constraints, authentication states, local rule checks, and finite-state
runtime monitors.

\subsection{Counter or Resource-Budget Policy}

Counter policies constrain accumulated quantities along a trace. Let $z_t$ be a vector of counters
updated by
\[
z_t=z_{t-1}+g(s_{t-1},a_t,s_t),
\]
where $g$ records resource usage, exposure changes, query counts, or other cumulative quantities.
A counter policy requires
\[
z_t\in\mathcal Z_{\mathrm{safe}}
\qquad\text{for all }t.
\]
Examples include exposure limits, query budgets, rate limits, liquidity constraints, and daily trading
limits. These policies are trace-dependent because an action that is locally permissible may become
impermissible after accumulated usage crosses a threshold.

\subsection{Temporal or History-Dependent Policy}

Temporal policies impose ordering or timing obligations on traces. They may be represented by
finite-trace temporal formulas, such as formulas in $\mathrm{LTL}_f$ or related logics. A trace is
permissible when
\[
\tau\models \varphi_\Pi.
\]
Examples include approval-before-execution, deletion within a retention window, escalation after
unresolved evidence, prohibition of execution after a risk violation, and restrictions on reusing data
for a new purpose. Temporal policies make explicit that permissibility depends on event order and
history.

\subsection{Information-Flow and Access-Control Policy}

Information-flow policies govern which agents or components may access, transform, store, or
release which data. A trace may contain events
\[
e_t=(u_t,c_t,d_t,p_t,r_t),
\]
where $u_t$ is a user or agent, $c_t$ is a component or tool, $d_t$ is a data item or data class,
$p_t$ is a declared purpose, and $r_t$ is a retrieved, released, or stored record. A policy may require
\[
\mathrm{Auth}_\Pi(u_t,c_t,d_t,p_t)=1
\]
and may also constrain downstream flow, logging, retention, redaction, and release. This model is
important for PII, source authorization, purpose limitation, and privacy-preserving agent traces.

\subsection{Proof-Calculus Policy}

For proof-carrying execution, a policy system may be viewed as a calculus supporting judgments
\[
\Pi\vdash \tau:\mathrm{Permitted}.
\]
A certificate $\pi$ is a checkable witness for such a judgment. Soundness requires
\[
\Pi\vdash \tau:\mathrm{Permitted}
\Longrightarrow
\tau\in L_\Pi.
\]
This model distinguishes semantic permissibility from derivable permissibility. A trace may be
semantically permissible but not certifiable by the current proof system or proof memory.

\subsection{Versioned or Dynamic Policy}

In deployed systems, policies evolve. We write
\[
\Pi_t\longrightarrow \Pi_{t'}
\]
to denote a policy update from time $t$ to time $t'$. A certificate issued under $\Pi_t$ may not
remain valid under $\Pi_{t'}$. Versioned policies motivate policy lineage: a certificate should record
the policy version, source, authority, and effective date under which a trace was certified.

\paragraph{Summary.}
The abstract semantic model defines the target permissible region. Automaton, counter, temporal,
and information-flow models expose common policy structures. Proof-calculus policies support
certificates. Versioned policies explain why certification must be tied to policy lineage.

\section{Proof-Memory Models}
\label{app:proof-memory-models}

This appendix defines proof-memory models and distinguishes expressiveness from capacity.

\paragraph{Proof-memory interface.}
A proof-memory model is an abstract stateful structure
\[
\mathcal H=(\mathcal S,h_0,\mathrm{Update},\mathrm{Query},\mathrm{Emit}),
\]
where $\mathcal S$ is the memory-state space, $h_0\in\mathcal S$ is the initial memory state,
$\mathrm{Update}$ incorporates trace events, $\mathrm{Query}$ retrieves proof-relevant facts, and
$\mathrm{Emit}$ produces certificate components.

Given trace event $e_t$, the memory state evolves as
\[
h_t=\mathrm{Update}(h_{t-1},e_t).
\]
The memory state $h_t$ is not merely storage. It is the proof-relevant history available to the
certifier at time $t$.

\paragraph{Proof-memory expressiveness.}
Proof-memory expressiveness describes which trace properties or certified trace languages can be
supported by a given memory model. For a fixed policy system $\Pi$, let
\[
\mathsf{CertLang}_\Pi(\mathcal H)
=
\left\{
L_{\mathrm{cert}}(M_\Pi):
M_\Pi \text{ uses proof memory } \mathcal H
\text{ and }
L_{\mathrm{cert}}(M_\Pi)\subseteq L_\Pi
\right\}.
\]
We say that $\mathcal H_1$ is no more certification-expressive than $\mathcal H_2$, written
\[
\mathcal H_1\preceq_{\mathrm{cert}}\mathcal H_2,
\]
if
\[
\mathsf{CertLang}_\Pi(\mathcal H_1)
\subseteq
\mathsf{CertLang}_\Pi(\mathcal H_2).
\]
Thus expressiveness is qualitative: it concerns which trace obligations can be certified.

\paragraph{Proof-memory capacity.}
Proof-memory capacity is quantitative. For horizon $T$, let $\mathcal S_T$ be the set of memory
states reachable after traces of length at most $T$. A simple capacity measure is
\[
\mathrm{Cap}_{\mathcal H}(T)=\log |\mathcal S_T|.
\]
This measures how many bits of proof-relevant history the memory model can distinguish at
horizon $T$. A policy that requires distinguishing many history classes requires correspondingly
large proof-memory capacity.

\paragraph{Examples of proof-memory models.}
Common proof-memory models include:
\begin{itemize}
    \item \textbf{Finite-state memory:} supports local checks and simple workflow constraints.
    \item \textbf{Counter memory:} supports budgets, exposure limits, rate limits, and query counts.
    \item \textbf{Stack memory:} supports nested obligations, scoped approvals, and recursive tasks.
    \item \textbf{Key-value or table memory:} supports entity records, source permissions, and
    account-specific constraints.
    \item \textbf{DAG or provenance memory:} supports evidence binding, computation replay, and
    derivation traces.
    \item \textbf{Append-only ledger memory:} supports auditability, policy lineage, and durable
    certification records.
    \item \textbf{Cryptographic commitment or zero-knowledge memory:} supports privacy-preserving
    certification and selective disclosure.
\end{itemize}

\paragraph{Principle.}
A certifier can only certify trace properties that its proof memory can distinguish. This principle is
the memory analogue of trace non-compositionality: if a policy depends on history, then certification
requires access to the relevant history.

\section{Additional Boundary Measures}
\label{app:additional-boundary-measures}

This appendix collects additional measures for studying certification boundaries.

\paragraph{Certifier-policy distance.}
The distance between the operational certified language and the semantic permissible language is
\[
d_\mu(M_\Pi,\Pi)
=
\mu_G\!\left(L_{\mathrm{cert}}(M_\Pi)\triangle L_\Pi\right),
\]
where $\triangle$ denotes symmetric difference. Expanding the symmetric difference gives
\[
d_\mu(M_\Pi,\Pi)
=
u_\Pi(M_G,M_\Pi)+m_\Pi(M_G,M_\Pi).
\]
Thus certification error decomposes into certified-but-impermissible risk and missed-certification
risk.

\paragraph{Monitorability gap and certification gap.}
Let $O(\tau)$ denote the observable part of a trace available to a monitor. This may include final
outputs, actions, tool logs, activations, or chain-of-thought. A monitorability gap arises when
$O(\tau)$ lacks enough policy-relevant information to determine whether the trace is unsafe or
impermissible. A certification gap arises when the available observations and proof memory are
insufficient to construct a checkable certificate, even when the trace is semantically permissible.
This distinction separates detection from authorization: monitorability supports suspicion or
warning, while certifiability supports execution permission.

\paragraph{Residual execution risk.}
The residual execution risk is
\[
\rho_\Pi(M_G,M_\Pi)
=
\Pr_{\tau\sim\mu_G}
[
\tau\notin L_\Pi
\mid
\tau\in L_{\mathrm{cert}}(M_\Pi)
].
\]
When $y_\Pi(M_G,M_\Pi)>0$,
\[
\rho_\Pi(M_G,M_\Pi)
=
\frac{u_\Pi(M_G,M_\Pi)}{y_\Pi(M_G,M_\Pi)}.
\]
Sound certifiers have zero residual execution risk.

\paragraph{Permissibility recall.}
Permissibility recall is
\[
\mathrm{Recall}_\Pi(M_G,M_\Pi)
=
\Pr_{\tau\sim\mu_G}
[
\tau\in L_{\mathrm{cert}}(M_\Pi)
\mid
\tau\in L_\Pi
].
\]
When $1-\mathrm{Gap}_\Pi(M_G)>0$,
\[
\mathrm{Recall}_\Pi(M_G,M_\Pi)
=
1-
\frac{m_\Pi(M_G,M_\Pi)}{1-\mathrm{Gap}_\Pi(M_G)}.
\]

\paragraph{Certificate complexity.}
For a trace $\tau$, define its certificate complexity under policy system $\Pi$ by
\[
K_\Pi(\tau)
=
\min\{|\pi|:\Pi\vdash \tau:\mathrm{Permitted}\},
\]
with $K_\Pi(\tau)=\infty$ if no such certificate exists. This measures the proof burden needed to
move a trace across the certification boundary.

\paragraph{Expected certificate burden.}
For a generator distribution $\mu_G$, define
\[
\bar K_\Pi(M_G)
=
\mathbb E_{\tau\sim\mu_G}
[
K_\Pi(\tau)
\mid
\tau\in L_\Pi
],
\]
when the conditional expectation is well-defined. This measures the average proof burden among
permissible generated traces.

\paragraph{Policy-lineage drift.}
For two policy systems $\Pi_t$ and $\Pi_{t'}$, define a semantic policy distance
\[
d_\mu(\Pi_t,\Pi_{t'})
=
\mu\!\left(L_{\Pi_t}\triangle L_{\Pi_{t'}}\right).
\]
A deployment-specific drift measure is
\[
D_\mu(t,t')
=
\mu
\left(
L_{\mathrm{cert}}(M_{\Pi_t})
\setminus
L_{\Pi_{t'}}
\right).
\]
This measures the mass of traces certified under policy $\Pi_t$ that would no longer be permissible
under policy $\Pi_{t'}$.

\paragraph{Certification rate-distortion.}
Let $\mathcal M(B)$ denote a class of certifiers whose proof-memory capacity, verification latency,
or evidence budget is at most $B$. A generic certification distortion is
\[
D_\Pi(B)
=
\inf_{M_\Pi\in\mathcal M(B)}
\left[
\alpha u_\Pi(M_G,M_\Pi)
+
\beta m_\Pi(M_G,M_\Pi)
\right],
\]
where $\alpha,\beta\ge 0$ weight unsafe certification and missed certification. This quantity asks:
under a bounded certification resource, how well can the operational certified boundary approximate
the semantic permissible boundary?


\section{Illustrative Examples and Proof Memory}
\label{app:illustrative-examples}

The examples here explain why proof memory is a natural organizing
dimension for certifiers. In classical automata theory, memory controls what
languages can be recognized. For permissibility certification, memory controls
what trace properties can be certified. Some constraints are local; others
depend on prior actions, cumulative exposure, evidence provenance, computation
replay, or audit history. Thus proof memory is what allows a certifier to move
from local action checking to trace-level certification.

The following examples serve four purposes. An action
\(
a\in C
\)
may represent a trade, a portfolio decision, or an entire trading strategy, depending on abstraction level. 

First, the examples witness the nonemptiness of the permissibility gap
\(
C\setminus P \neq \varnothing.
\)
Second, they suggest why permissibility may require external certification, rather than generation constraints alone.
Third, this motivates organizing certifiers by the richness of the proof memory they
maintain:
\[
M_\Pi^{(0)}
\preceq
M_\Pi^{(1)}
\preceq
M_\Pi^{(2)}
\preceq
M_\Pi^{(3)}.
\]
where $\preceq$ denotes ``no more expressive than”. 
Richer proof memory may permit certification of richer permissible languages. Fourth, they show that action-level permissibility may not compose into trace-level permissibility.

The role of memory is analogous to state augmentation in sequential decision-making.
In a Markovian model, the current state must contain enough information from the past
to determine future transitions and evaluations.  Similarly, in permissibility
certification, proof memory records the information from the past needed to certify
future actions: prior trades, current holdings, exposures, evidence, computations,
approvals, and unresolved obligations. Thus proof memory may be viewed as a
certification-oriented state variable. Trace dependence does not contradict Markovian modeling;
it says that the state must contain the proof-relevant history.

\subsection{Computable but Impermissible Strategy}
\label{app:example-impermissible-strategy}

Let
\(
a\in C
\)
be a computable trading strategy exploiting a market microstructure pattern. Suppose
\(
a\notin P
\)
because it violates exchange conduct rules.
Then
\(
a\in C\setminus P.
\)

This motivates finite-state certifiers of class
\[
M_\Pi^{(0)}.
\]

\textit{Concrete instantiation}. Examples include spoofing-like order placement,
latency-sensitive stale-quote strategies,
or microstructure exploits.

This illustrates
\[
\text{Profitability}
\not\Rightarrow
\text{Permissibility}.
\]

\subsection{Optimality Without Permissibility}
\label{app:example-optimality}

Let
\(
w^*
\)
be a portfolio allocation maximizing return subject to conventional risk constraints. Suppose
\(
w^* \notin P
\)
because it induces hidden leverage concentration forbidden by regulation.
Then
\(
w^* \in C \setminus P.
\)
The optimization is valid,
yet the resulting action is impermissible.

This motivates certifiers with arithmetic proof memory of class
\[
M_{\Pi}^{(1)}.
\]

\emph{Concrete instantiation.}
Examples include leveraged portfolios that satisfy optimization objectives
yet violate concentration, liquidity, or exposure constraints. Here permissibility depends not only on feasibility,
but on proving structural constraints.

This illustrates
\[
\textit{Optimality}
\not\Rightarrow
\textit{Permissibility}.
\]
Thus richer certification may require more than finite-state recognition.

\subsection{Permissible Actions, Impermissible Trace}
\label{app:example-noncompositionality}

Suppose each trade \(a_i\) in a sequence passes local action-level checks:
\[
a_i\in P
\quad \text{for all } i.
\]
In financial applications, the policy system \(\Pi\) should be understood
schematically as combining multiple sources of constraints:
\[
\Pi
=
\text{law/regulation}
+
\text{fund mandate}
+
\text{client policy}
+
\text{internal risk rule}
+
\text{execution-control rule}.
\]
This expression is schematic rather than algebraic: it indicates that financial
permissibility is usually induced by a layered policy system. Section~3.7
develops this point further by connecting policy layers to proof obligations and
persistent proof memory.

For example, a portfolio policy may impose a concentration limit:
semiconductor exposure must remain below \(20\%\). Each individual buy order may
pass local checks: the security is tradable, the account is authorized, and the
single-order size is below a trading limit. Yet the sequence of orders may push
aggregate semiconductor exposure above \(20\%\). The violation is not visible
from the action name alone; it is visible only from the trace and the evolving
portfolio state.

Thus the sequence of trades may create an impermissible trace.  The following proposition formalizes this phenomenon by an explicit cumulative-exposure construction.

\begin{proposition}[Non-Compositionality of Action-Level Permissibility]
Action-level permissibility need not compose into trace-level permissibility.
There may exist a trace
\[
\tau=(s_0,a_1,s_1,\ldots,a_T,s_T)
\]
such that
\[
a_i\in P
\quad \text{for all } i,
\]
but
\[
\tau\notin L_\Pi.
\]
\end{proposition}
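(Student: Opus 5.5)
The plan is to prove this existence statement by an explicit construction: a cumulative-exposure (counter) policy in the style of the Counter or Resource-Budget Policy model of Appendix~\ref{app:canonical-policy-models}. It mirrors the query-budget argument used for Proposition~\ref{prop:trace-noncompositionality}, but is phrased in the state--action trace model of Appendix~\ref{app:formal-trace-model}. As in that proposition, the statement only asserts possibility, so one witness instance suffices.

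\textbf{Step 1: state and action spaces.} Take the state $s_t=(V,e_t)$, where $V>0$ is a fixed portfolio value and $e_t\ge 0$ is the semiconductor exposure. Let the actions be buy orders $a=\mathrm{buy}(q)$ with $q\ge 0$, transition $\delta(s_{t-1},a_t)=(V,e_{t-1}+q_t)$, and $e_0=0$. Holding $V$ fixed keeps the arithmetic trivial; encoding price moves or funding into the state is unnecessary for existence.

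\textbf{Step 2: the action-level set.} Define $P=\{\mathrm{buy}(q):0\le q\le 0.1V\}$. This encodes a single-order trading limit, together with implicit tradability and account authorization, checked on the action alone without reference to history.

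\textbf{Step 3: the trace-level policy.} Instantiate the counter policy with $z_t=e_t$, increment $g(s_{t-1},a_t,s_t)=q_t$, and safe set $\mathcal Z_{\mathrm{safe}}=[0,0.2V)$. Set
\[
L_\Pi=\{\tau: e_t<0.2V \text{ for all } t\le T\}.
\]

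\textbf{Step 4: the witness trace.} Take $T=3$ and $a_1=a_2=a_3=\mathrm{buy}(0.1V)$. Each $a_i\in P$, since $q_i=0.1V\le 0.1V$. The exposure path is $e_1=0.1V$, $e_2=0.2V$, $e_3=0.3V$. Already $e_2=0.2V\notin[0,0.2V)$, and $e_3$ violates the limit as well, so $\tau\notin L_\Pi$.

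The only point needing care is not mathematical difficulty but well-posedness. One must make sure that $L_\Pi$ is a genuine trace-level predicate that is not implied by membership in $P$. This can be checked directly: the prefix $(s_0,a_1,s_1)$ lies in $L_\Pi$, while the extension with the identical action does not. This shows that no action-level set can decide $L_\Pi$, since permissibility depends on the accumulated state $e_{t-1}$. I would close by noting that the construction is exactly the proof-memory phenomenon: a certifier must retain the counter $e_t$, which is the arithmetic memory of class $M_\Pi^{(1)}$.
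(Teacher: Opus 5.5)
Your construction is correct and is essentially the paper's own proof: a single-order limit of 10\% defining $P$, a cumulative 20\% sector-exposure cap defining $L_\Pi$, and three buy orders of 10\% each starting from $e_0=0$. The differences are cosmetic. You scale by a fixed $V$, and you use a strict cap $e_t<0.2V$, so the violation already appears at $e_2$; the paper uses $e_t\le 0.20$, so its violation appears at $e_3=0.30$.
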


\begin{proof}
We prove the claim by construction. Consider a simplified portfolio state
$s_t=e_t \in [0,1]$, where $e_t$ denotes the portfolio exposure to a given sector
after time $t$. Let the computable actions be buy orders
\[
a(q) \in C, \qquad q \ge 0,
\]
where $q$ is the additional sector exposure created by the order. The transition rule is
\[
e_t = e_{t-1}+q_t
\]
when action $a(q_t)$ is taken.

Define action-level permissibility by a single-order limit:
\[
P=\{a(q)\in C: 0\le q \le 0.10\}.
\]
Thus every order of size at most $10\%$ sector exposure is permissible at the action level.

Now let the trace-level policy system $\Pi$ impose a cumulative exposure constraint:
\[
L_\Pi
=
\{\tau=(s_0,a_1,s_1,\ldots,a_T,s_T): e_t \le 0.20
\text{ for every } t\}.
\]
This says that the sector exposure must never exceed $20\%$ along the trace.

Take initial exposure $e_0=0$ and consider the trace generated by three buy orders
\[
a_1=a(0.10), \qquad a_2=a(0.10), \qquad a_3=a(0.10).
\]
Each action is individually permissible, since
\[
a_i\in P \qquad \text{for } i=1,2,3.
\]
However, the resulting exposure sequence is
\[
e_1=0.10,\qquad e_2=0.20,\qquad e_3=0.30.
\]
Since $e_3>0.20$, the trace violates the cumulative exposure constraint. Therefore
\[
\tau \notin L_\Pi,
\]
although every action in the trace belongs to $P$. Hence action-level permissibility need not
compose into trace-level permissibility.
\end{proof}

This illustrates
\[
\text{Permissible Actions}
\not\Rightarrow
\text{Permissible Trace}.
\]

\subsection{Correct Answer, Impermissible Trace}
\label{app:example-correct-answer-impermissible-trace}

Let
\(
\tau
\)
be a reasoning trace produced by a language model.
Suppose the resulting financial conclusion is correct,
but
\(
\tau
\)
contains fabricated citations, unsupported claims,
or unreplayable calculations.
Then
\(
\tau \notin L_\Pi.
\)
Correctness of answer does not imply permissibility of trace.

A financial reporting example makes this distinction concrete. Suppose an agent
writes that a firm's gross margin improved year over year. The statement may be
true as a final claim, but the trace may still be impermissible if the agent used
inconsistent periods, unsupported sources, incorrect XBRL tags, incompatible
units, or unreplayable arithmetic. A certifiable explanation must bind the claim
to a derivation trace: source filings, extracted facts, period alignment, unit
normalization, formula, computation, and claim generation.

This phenomenon appears concretely in XBRL-based financial report analysis,
where LLMs must bind claims to filing facts, taxonomy concepts, periods, units,
and computations~\citep{han2024xbrlagent}.

This motivates proof-carrying certifiers of class
\[
M_{\Pi}^{(2)}.
\]

\emph{Concrete instantiation.}
A financial agent may correctly infer leverage has declined,
yet cite nonexistent filings
or use inconsistent accounting periods.

This illustrates

\[
\textit{Correct Answer}
\not\Rightarrow
\textit{Permissible Trace}.
\]

\subsection{Escalation as a Permissible Action}
\label{app:example-escalation}

Suppose evidence is incomplete,
so neither approval nor rejection can be certified.
Then
\[
M_\Pi(\tau)= \mathrm{Escalate}.
\]

\begin{proposition}[Escalation Under Uncertainty]
There exist states in which
\[
\mathrm{Escalate}
\]
is the unique permissible action.
\end{proposition}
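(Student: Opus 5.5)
The plan is to prove the proposition by explicit construction, in the same style as the proofs of Proposition~\ref{prop:trace-noncompositionality} and the cumulative-exposure proposition above. It is an existence claim, so a single policy system $\Pi$ and state $s$ suffice, provided that in $s$ every non-escalation action is impermissible and $\mathrm{Escalate}$ is permissible. I would model the action set available at a payment-authorization state as
\[
C_s=\{\mathrm{Execute},\ \mathrm{Reject},\ \mathrm{Escalate}\},
\]
where $\mathrm{Execute}$ releases a payment, $\mathrm{Reject}$ permanently declines it, and $\mathrm{Escalate}$ transfers the trace to a human or stronger certifier.

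Next I would specify the state $s$ and a layered policy $\Pi$ with three conjunctive obligations.
\begin{itemize}[leftmargin=*, label=$\bullet$]
\item An approval-before-execution rule: $\mathrm{Execute}$ is permitted only if the trace contains a fresh approval bound to the trace hash.
\item A no-wrongful-denial rule: $\mathrm{Reject}$ is permitted only if the trace contains evidence of a policy violation, such as an invalid beneficiary or a breached limit. This reflects the client-mandate or service obligation not to refuse valid instructions without cause.
\item A default obligation: when neither certified approval nor certified violation evidence is present, the trace must be escalated. This formalizes the ``mandatory escalation after unresolved evidence'' temporal policy listed among the canonical models.
\end{itemize}
The state $s$ encodes incomplete evidence: the beneficiary account was recently changed, and the approval on record is stale, having been bound to an earlier trace hash. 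Consequently proof memory contains neither a valid approval nor a witness of violation.

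The verification then checks each action in turn. Appending $\mathrm{Execute}$ fails the first rule, since no fresh bound approval exists. Appending $\mathrm{Reject}$ fails the second, since no violation evidence exists. Appending $\mathrm{Escalate}$ satisfies the third rule and vacuously satisfies the other two. Hence $P_s=\{\mathrm{Escalate}\}$, so escalation is the unique permissible action. As a consistency remark, under Axiom~2 a sound certifier cannot certify either $\mathrm{Execute}$ or $\mathrm{Reject}$ here, so $M_\Pi(\tau)=\mathrm{Escalate}$ matches the setup preceding the statement.

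The main obstacle is the $\mathrm{Reject}$ case. Under most policies, refusing to act is trivially permissible, and that would make uniqueness fail. This is why the construction must include an explicit obligation that penalizes unjustified rejection. I would justify that obligation via mandate and service duties, and would make explicit that ``permissible action'' is defined relative to a policy that constrains inaction as well as action. If a reader objects, I would note that the statement is existential, so it suffices that such policies exist and are realistic.
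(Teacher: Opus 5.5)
Your proposal is correct and follows the same argument as the paper. The paper's proof supposes that neither approval nor rejection is derivable under $\Pi$, then concludes that execution is unjustified, rejection is unsupported, and so $\mathrm{Escalate}$ is the only permissible action. Your construction is more careful: it actually exhibits such a state, and it makes explicit the premises the paper leaves implicit. These are that the available actions are exactly $\mathrm{Execute}$, $\mathrm{Reject}$, $\mathrm{Escalate}$; that escalation is itself permitted; and, as you correctly identify, that the policy forbids unsupported rejection, without which uniqueness fails.
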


\begin{proof}
Suppose neither approval nor rejection is derivable under
\(
\Pi.
\)
Then execution is unjustified,
and rejection is unsupported.
Escalation is therefore the only permissible action. 
\end{proof}

In this sense, escalation is not merely a rejection alternative.
It may be the required continuation of certification under uncertainty.

This motivates escalation-aware certifiers of class
\[
M_\Pi^{(3)}.
\]

\emph{Concrete instantiation.}
In shell-company screening,
uncertain ownership structures may force escalation.

This illustrates
\[
\textit{High Confidence}
\not\Rightarrow
\textit{Permission}.
\]

\subsection{Proof-Carrying Execution}
\label{app:example-proof-carrying-execution}

Suppose a trading agent proposes the trace
\[
\tau=\text{``buy asset A''}.
\]
Execution requires a certificate
\[
\pi
=
(\pi_{\mathrm{suitability}},
\pi_{\mathrm{risk}},
\pi_{\mathrm{liquidity}}).
\]
Execution proceeds only when the trace \(\tau\) is accompanied by
a valid certificate \(\pi\).  No certificate, no execution.

This motivates certifiers for structured proof objects.

\emph{Concrete instantiation.}
A trade may be computationally feasible
yet executable only when accompanied by certifiable evidence.

This illustrates
\[
\textit{Provable Permissibility} \Rightarrow \textit{Executable Trace}.
\]
This elevates certification from filter to precondition for execution. This operationalizes the second purpose above:
permissibility may require certification rather than generation constraints alone.

\subsection{Persistent Proof Memory}
\label{app:example-persistent-proof-memory}

Some certifications require persistent proof memory.  A certifier may maintain
an append-only memory state
\[
M_{t+1}=M_t\cup\{\pi_t\},
\]
where \(\pi_t\) records a certificate produced at time \(t\).

In regulated finance, however, persistent proof memory should record more than
the certificate alone.  It should record both the lineage of the trace and the
lineage of the policy system under which the trace was certified.  A financial
trace is not certified against a timeless policy; it is certified against a
particular version of the policy system in force at certification time.

Thus a persistent-memory certifier may store records of the form
\[
(\tau_t,\pi_t,\Pi_t,\mathrm{source}_t,\mathrm{version}_t,\mathrm{effective\ date}_t),
\]
where \(\tau_t\) is the certified trace, \(\pi_t\) is the certificate,
\(\Pi_t\) is the policy system used for certification, and the remaining fields
record the legal, regulatory, mandate, or internal-policy lineage of \(\Pi_t\).

This distinction matters because financial requirements evolve.  Laws,
regulations, supervisory guidance, client mandates, fund prospectuses, and
internal risk rules may change over time.  A later auditor may ask two different
questions:
\[
\text{Was } \tau_t \text{ permissible under } \Pi_t?
\]
and
\[
\text{Would } \tau_t \text{ remain permissible under a later policy system }
\Pi_{t'}?
\]
These are distinct questions.  The first is historical verification; the second
is re-certification under policy change.

Thus persistent proof memory may be decomposed as
\[
\text{Persistent Proof Memory}
=
\text{Trace Lineage}
+
\text{Policy Lineage}.
\]
This decomposition becomes especially important when \(\Pi\) has internal
structure: a later audit may need to know not only which trace was certified,
but also which policy layers, policy versions, access permissions, and evidence
obligations were used to certify it.
Trace lineage records how the action, report, recommendation, or trade was
produced: retrieval, tool use, evidence, computation, claim generation, and
execution conditions.  Policy lineage records which policy system was applied:
legal requirements, regulatory guidance, fund mandate, client policy, internal
risk rule, execution-control rule, version, source, and effective date.

\paragraph{Layered policy systems and proof obligations.}
The policy system \(\Pi\) should not be viewed as a flat list of rules. In real
financial and access-control environments, policy is often layered, partially
ordered, and versioned. Some components are hard constraints; some are inherited
permissions; some are temporal; some require escalation; and some depend on
evidence, prior trace history, or policy version.

Schematic examples include
\[
\Pi
=
\Pi_{\mathrm{law}}
\wedge
\Pi_{\mathrm{mandate}}
\wedge
\Pi_{\mathrm{client}}
\wedge
\Pi_{\mathrm{risk}}
\wedge
\Pi_{\mathrm{access}}
\wedge
\Pi_{\mathrm{privacy}}
\wedge
\Pi_{\mathrm{execution}}.
\]
This expression is schematic rather than literal: the layers need not form a
simple linear order. For example, legal or regulatory constraints may dominate
client-specific preferences; fund mandates may specialize institutional rules;
access-control policies may be inherited through roles or groups; and execution
rules may depend on the current market state, account state, or unresolved
certification obligations.

Access control is a particularly clear example. A financial agent may need to
certify that a user is authorized to access a source, that the agent is permitted
to call a tool, that the retrieved data may be used for the stated purpose, that
sensitive information may not leave a protected environment, that the final
output may contain the derived information, and that the trace may be stored in
logs. Some of these obligations are local, such as whether a user has permission
to read a document. Others are trace-dependent, such as whether a source was
retrieved for one purpose and later used for another. Still others are
policy-lineage dependent, because an access may have been allowed under policy
version \(\Pi_t\) but may require re-certification under a later policy system
\(\Pi_{t'}\).

Under a layered policy system, certification naturally decomposes into layered
proof obligations. If
\[
\Pi=\bigwedge_{\ell=0}^{d}\Pi_\ell,
\]
then a certificate may be represented schematically as
\(
\pi=(\pi_0,\pi_1,\ldots,\pi_d),
\)
where \(\pi_\ell\) certifies the obligations induced by layer \(\Pi_\ell\). For
example, an access-control component may certify authorized retrieval, an
evidence component may certify source binding, an accounting component may
certify period and unit consistency, a risk component may certify mandate or
exposure constraints, and a privacy component may certify non-disclosure or
redaction obligations.

Thus policy layers and proof-memory classes play different roles. The policy system determines
what must be certified; proof memory determines what can be remembered, replayed, and certified.
Proof memory is the bridge between them: it records which policy-layer obligations have already
been satisfied, under which policy version, and with which supporting evidence.

This motivates persistent-memory certifiers of class
\[
M_\Pi^{(3)}.
\]
Examples include immutable audit trails, blockchain commitments, zero-knowledge
proof certificates, and versioned policy ledgers.

This suggests a hierarchy of proof memory classes:
\[
\text{Finite}
<
\text{Structured}
<
\text{Persistent}.
\]
Richer proof memory may permit certification of richer permissible languages.

\subsection{Self-Approval Is Not Certification}
\label{app:example-self-approval}

A final example clarifies why the proposal and certification roles should not be collapsed.
Suppose an autonomous trading agent proposes a trace $\tau$ and also outputs the statement

\[
\pi_{\mathrm{self}}=\text{``this trace is permissible because I certify it.''}
\]

This statement should not by itself count as a valid certificate. It is only a self-assertion by the
same system whose behavior is being judged. If such self-approval were sufficient, then the
distinction between generation and certification would collapse: a generator could make a trace
executable merely by claiming that it is executable.

A simple governance rule is therefore:

\[
R_{\mathrm{ind}}:\quad
\text{A trace may execute only if its certificate is verified by a certifying role
distinct from the proposal itself.}
\]

Here ``distinct'' need not mean a different physical machine. It means a logically separate
certification role: for example, an external policy engine, a proof checker, a compliance contract,
a verifier with access to policy $\Pi$, or a human escalation process. The generator may provide
evidence or a proposed proof object, but the certificate must be checkable by the certifier.

Under this rule, the trace
\[
(\tau,\pi_{\mathrm{self}})
\]
does not execute merely because the agent approves itself. Instead, the certifier must verify that
there exists a valid certificate
\[
\pi \quad \text{such that} \quad \Pi \vdash \tau:\mathrm{Permitted}.
\]
If no such certificate can be verified, the outcome is rejection or escalation:
\[
M_\Pi(\tau)\in \{\bot,\mathrm{Escalate}\}.
\]

This example illustrates the principle
\[
\text{Self-Approval} \not\Rightarrow \text{Permissibility}.
\]

The example is not meant to prove a formal paradox. Its purpose is to show why trustworthy
financial agents should separate the authority to propose a trace from the authority to certify
that the trace may execute. More generally, policies that refer to certification, delegation, or
authority may create self-referential obligations. Such obligations motivate the incompleteness
questions discussed later in Section~7.

\begin{figure}[!t]
    \centering
    \includegraphics[width=0.9\textwidth]{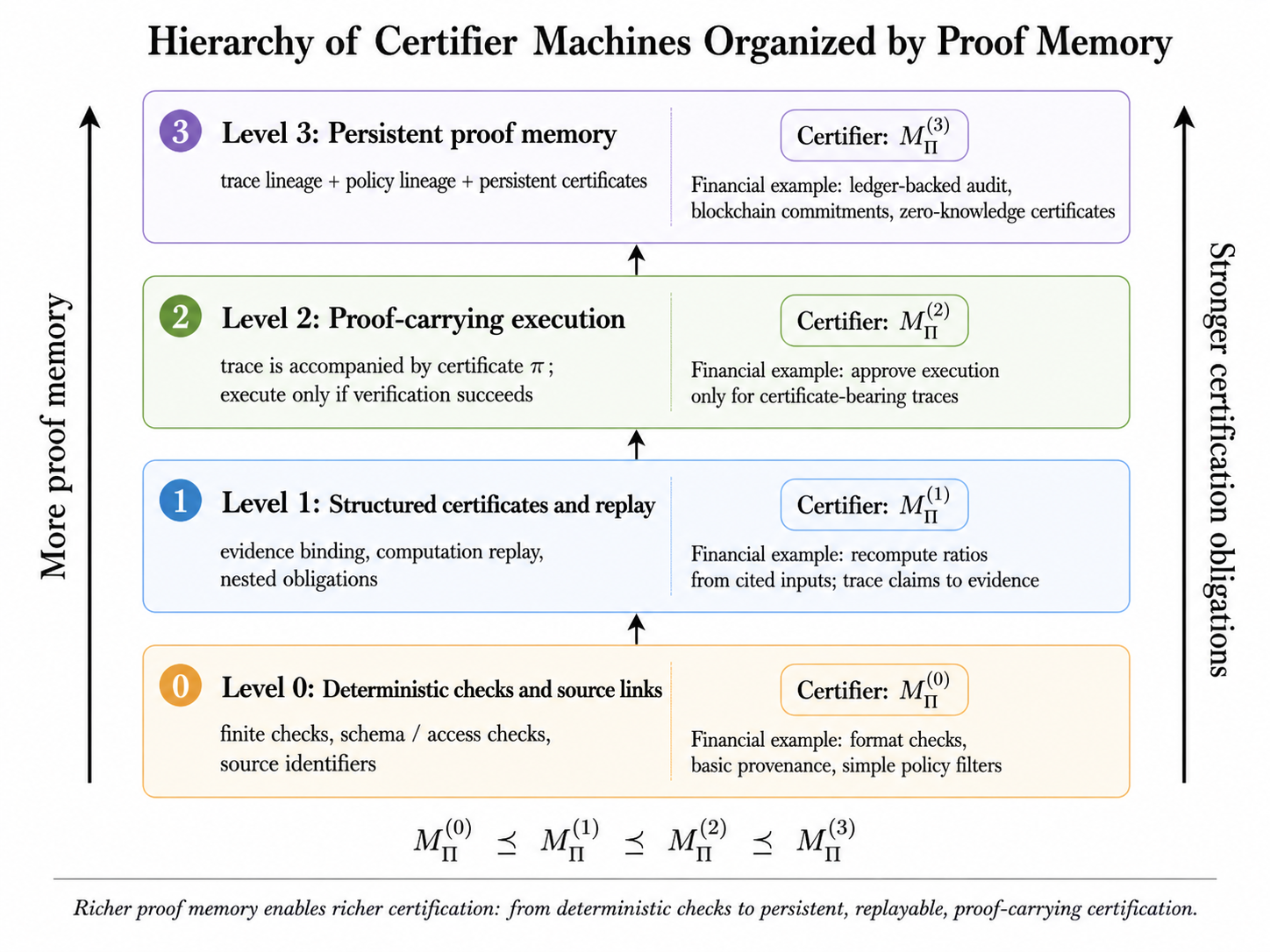} 
     \caption{
    \textbf{Proof-memory classes of certifiers.}
    The examples in Appendix~\ref{app:illustrative-examples}
motivate a progression of certifier classes ordered by proof-memory support, from local checks to
persistent proof memory:
$M^{(0)}_\Pi \preceq M^{(1)}_\Pi \preceq M^{(2)}_\Pi \preceq M^{(3)}_\Pi$.
Level 0 performs deterministic checks and attaches source links; Level 1 constructs structured
certificates and supports replay of evidence and computations; Level 2 supports proof-carrying
execution, where a trace executes only when accompanied by a valid certificate; and Level 3 maintains
persistent proof memory, including trace lineage, policy lineage, and durable certification records.
The hierarchy is not a claim that all implementations fall into exactly four classes; it illustrates how
richer proof memory supports richer trace-level certification.}
    \label{fig:proof-memory-hierarchy}
\end{figure}

\subsection{Examples as Separation and Proof-Memory Witnesses}
\label{app:examples-as-witnesses}

The examples in Appendix~\ref{app:illustrative-examples} have a cumulative structure.
Appendices~\ref{app:example-impermissible-strategy}--\ref{app:example-optimality}
witness the action-level permissibility gap. Appendix~\ref{app:example-noncompositionality}
shows why locally permissible actions may compose into an impermissible trace.
Appendix~\ref{app:example-correct-answer-impermissible-trace} motivates evidence binding,
claim support, and computation replay. Appendix~\ref{app:example-escalation} shows why
escalation may itself be the permissible continuation under uncertainty.
Appendices~\ref{app:example-proof-carrying-execution}--\ref{app:example-persistent-proof-memory}
motivate proof-carrying execution, policy lineage, and persistent proof memory.
Section~\ref{sec:privacy-zk} develops the privacy and zero-knowledge trace-certification setting.

Thus the examples are not isolated anecdotes; they are witnesses for increasingly rich certification requirements. They point to the same paradigm:
\[
\text{language models propose; Permissibility Machines certify.}
\]
Certification mediates execution.

These examples witness recurring separation phenomena:
\[
\text{Profitability}
\not\Rightarrow
\text{Permissibility},
\]
\[
\text{Optimality}
\not\Rightarrow
\text{Permissibility},
\]
\[
\text{Permissible Actions} \not\Rightarrow 
\text{Permissible Trace},
\]
\[
\text{Correct Answer}
\not\Rightarrow
\text{Permissible Trace},
\]
\[
\text{High Confidence}
\not\Rightarrow
\text{Permission},
\]
\[
\text{Self-Approval}
\not\Rightarrow
\text{Permissibility}.
\]
They also motivate the certifier hierarchy
\[
M_\Pi^{(0)}
\preceq
M_\Pi^{(1)}
\preceq
M_\Pi^{(2)}
\preceq
M_\Pi^{(3)}.
\]

Fig.~\ref{fig:proof-memory-hierarchy} summarizes this progression. The figure
should be read as an organizational hierarchy rather than as a claim that all
certifier implementations must fall into exactly four rigid classes. Its purpose
is to make explicit the role of proof memory as the organizing axis of
certification. At the lowest level, a certifier performs deterministic checks
and attaches source links. At richer levels, the certifier constructs structured
certificates, supports replay of evidence and computations, conditions execution
on proof objects, and eventually maintains persistent records for audit and
re-certification.

The central point is that proof memory is not merely storage. It records the
proof-relevant history needed to certify a trace: prior actions, current state,
retrieved evidence, computations, approvals, policy versions, and unresolved
obligations. As the policy system becomes more layered and trace-dependent, the
certifier must preserve more of this history in order to determine whether a
generated trace may execute.

Richer proof memory may permit certification of richer permissible languages. This motivates the heuristic principle:

\begin{heprinciple}[Memory--Expressiveness Tradeoff]
\[ \text{If~~~}
M_\Pi^{(i)}
\preceq
M_\Pi^{(j)},
\text{~~~then~~~}
L_{\mathrm{cert}}(M_\Pi^{(i)})
\subseteq
L_{\mathrm{cert}}(M_\Pi^{(j)}).
\]
\end{heprinciple}

The examples 
suggest both
\(
C\setminus P\neq \varnothing
\)
and the need for increasingly expressive certifiers.
They all point to the same paradigm:
\[
\boxed{
\text{Language Models propose;}
\quad
\text{Permissibility Machines certify}
}.
\]
Certification mediates execution.

These examples also suggest a broader hypothesis:
generation constraints may reduce violations in
\(
C\setminus P,
\)
but may not eliminate the need for certification. This motivates asking whether constrained generation can subsume certification at all.

\section{Related Work and Positioning}
\label{app:related-work}

This appendix positions the proposed framework relative to nearby areas. The contribution is not
that certification, monitoring, or proof checking are new. The contribution is to identify certified
execution traces as an organizing object for trustworthy AI agents.

\paragraph{Positioning.}
This paper sits at the intersection of chain-of-thought monitorability, LLM-agent safety, runtime
enforcement, proof-carrying code, and information-flow security. CoT monitorability studies whether
reasoning trajectories can be inspected for signs of misbehavior~\citep{korbak2025chain,
openai2025cotmonitorability}. Agent benchmarks and attacks expose failures arising from tool use,
untrusted data, environment interaction, and multi-step behavior~\citep{ruan2024toolemu,
debenedetti2024agentdojo,yang2024watchagents,zhang2024agentsafetybench}. Runtime enforcement
intervenes during execution~\citep{havelund2004jpaX,xiang2024guardagent}; proof-carrying code
checks safety certificates for code~\citep{necula1997proof}; and information-flow security studies
end-to-end confidentiality and data-flow constraints~\citep{sabelfeld2003informationflow}. Certified
traces combine these ideas around a policy-governed execution rule: no certificate, no execution.

\begin{table}[h]
\centering
\small
\begin{tabular}{p{0.23\linewidth}p{0.30\linewidth}p{0.38\linewidth}}
\toprule
Area & Main object & Relation to certified traces \\
\midrule
Output guardrails & Final answer or response & Useful but misses trace-level violations. \\
CoT monitorability & Reasoning trajectory & One observation channel inside a broader execution trace. \\
Agent-safety benchmarks & Tool-use failures and adversarial tasks & Expose failures that certified traces aim to govern. \\
Runtime monitoring & Execution-time rule violations & Detects or intervenes; certification authorizes before execution. \\
Proof-carrying code & Code plus safety proof & Formal ancestor of certificate-bearing execution. \\
Information-flow security & Data-flow constraints & Provides policy models for privacy and access-control traces. \\
\bottomrule
\end{tabular}
\caption{Positioning of certified traces relative to nearby research areas.}
\label{tab:positioning}
\end{table}


\paragraph{Proof-carrying code.}
Proof-carrying code requires executable code to carry a checkable safety proof before execution
~\citep{necula1997proof}. Our framework adapts this idea from code artifacts to AI-agent traces.
The certificate is attached not merely to a program, but to a generated sequence of retrievals, tool
calls, computations, memory updates, evidence bindings, and execution conditions.


\paragraph{Runtime monitoring and formal verification.}
Runtime monitors inspect system behavior against specifications, and formal verification studies
whether systems satisfy formal properties~\citep{havelund2004jpaX,leucker2009runtime}.
Permissibility Machines inherit this verification spirit, but focus on the agent-specific boundary
between generated traces and execution-authorized traces under policy systems.

\paragraph{AI agents and tool-use safety.}
Recent AI-agent systems can call tools, retrieve information, modify external state, and coordinate
multi-step workflows. This expands the safety object from final output to execution trace. Certified
traces provide a way to ask whether the full agent workflow, not merely the final answer, satisfies
the policy system.

\paragraph{Tool-using and scientific agents.}
Toolformer and ReAct show how language models can integrate tool calls and interleave reasoning
with acting~\citep{schick2023toolformer,yao2023react}. Scientific-agent systems such as ChemCrow
demonstrate the same shift in high-stakes domains, where agents use expert tools for chemistry,
drug discovery, and materials design~\citep{bran2024chemcrow}. Recent discussions of AI scientists
emphasize that autonomous scientific agents require stronger safeguarding, especially around tool
use, action spaces, and downstream impact~\citep{tang2025aiscientists}. These works motivate our
claim that the safety object is not merely the final answer, but the execution trace.

\paragraph{Chain-of-thought monitorability and reasoning traces.}
Recent work frames chain-of-thought monitorability as a new but fragile opportunity for AI safety:
visible reasoning may allow monitors to detect intent to misbehave, but the opportunity may change
as models, training methods, and reasoning formats evolve~\citep{korbak2025chain}. OpenAI's
monitorability evaluations define monitorability as the ability of a monitor to predict properties of
an agent's behavior from observations such as actions, outputs, activations, or chain-of-thought
signals~\citep{openai2025cotmonitorability}. Our framework is complementary. We treat
chain-of-thought as one possible component of a broader execution trace, together with retrievals,
tool calls, data flows, computations, memory updates, logs, and execution conditions. Visibility
alone is not certification: a trace may be visible but unsupported, unauthorized, private, or
unreplayable. Certified traces ask for checkable witnesses of permissibility, not merely inspectable
reasoning.

\paragraph{Agent safety and tool use.}
Agent-safety benchmarks increasingly study agents that combine reasoning with external tool calls.
For example, AgentDojo evaluates prompt-injection attacks and defenses for agents executing tools
over untrusted data~\citep{debenedetti2024agentdojo}. Such benchmarks expose precisely the kind
of trace-level failures that output checking misses: unsafe retrieval, malicious tool outputs, and
policy-violating actions. Our paper is complementary: rather than proposing another benchmark, it
asks what formal object should govern whether such traces may execute.

\paragraph{Layered attack surfaces for agentic systems.}
Recent survey work proposes a Layered Attack Surface Model for LLM-based agents, separating
threats across foundation, cognitive, memory, tool-execution, multi-agent, ecosystem, and
governance layers, and adding a temporal axis for attacks that persist across sessions or manifest
slowly~\citep{chu2026lasm}. This taxonomy is complementary to our framework. LASM asks where
agentic threats arise; certified traces ask when a generated trace may be authorized for execution
under a policy system.

\paragraph{Runtime guardrails and enforcement for agents.}
Recent work also studies guardrails and runtime enforcement for LLM agents. GuardAgent, for
example, oversees a target LLM agent by checking whether its behavior satisfies user-defined guard
requests and by translating those requests into executable checking code~\citep{xiang2024guardagent}.
These systems are closely related to our motivation. The distinction is that our paper makes
certification a precondition for execution and studies the induced certification boundary, rather than
treating guardrails only as runtime checks or post-hoc filters.

\paragraph{AI governance and auditability.}
Governance frameworks emphasize accountability, transparency, risk management, and auditability.
Certified traces translate these goals into technical objects: trace languages, certificates, proof
memory, policy lineage, and certification boundaries.

\paragraph{From model materials to trace materials.}
Recent licensing work for machine-learning models introduces the notion of \emph{Model Materials}:
the model architecture, parameters, data, documentation, software, evaluation assets, and related
artifacts distributed with a model. This is a useful analogy for certified execution. Model-release
governance asks what materials are needed to understand, reuse, or audit a model distribution.
Agent-execution governance asks what trace materials are needed to authorize an action: evidence,
tool calls, source access, approvals, computations, credentials, policy versions, and execution
conditions. Open model materials clarify what is being released; certified trace materials clarify
what is being executed.

\paragraph{Privacy and zero-knowledge proofs.}
Privacy-preserving verification and zero-knowledge proofs make it possible to prove selected
properties without revealing sensitive data. In this paper, they address the privacy--observability
tension: trace certification needs evidence, but raw traces may contain private information.

\paragraph{Financial AI systems and benchmarks.}
Financial AI systems and benchmarks motivate the framework because finance combines structured
data, regulations, audit requirements, privacy constraints, and execution risks. The proposed theory
is broader than finance, but finance provides concrete examples where trace-level certification is especially visible. Financial-domain agent systems provide concrete testbeds for trace certification. 

\paragraph{Industry movement toward financial agents.}
Recent industry announcements suggest that financial AI agents are moving from pilots toward
production workflows. OpenAI and PwC describe agents for CFO workflows such as procurement,
payments, treasury, tax, and close, with an emphasis on controls and human oversight
~\citep{openai2026pwcfinance}. American Express's ACE initiative points toward agentic commerce
and payment credentials for verified agents~\citep{amex2026ace}. Anthropic's financial-services
agents connect to market data, research platforms, and internal systems under governed access
controls~\citep{anthropic2026financeagents}. ICBC reports AI agent systems and large-model
infrastructure for settlement finance, credit risk, and risk-control workflows~\citep{icbc2025interimqa,
icbc2026annualresults}. These developments motivate our central question: when an agent enters a
real financial workflow, what certificate authorizes execution?

XBRL-Agent
uses LLM agents with retrievers and calculators for XBRL report analysis, exposing both the
promise and the reliability limits of tool-augmented financial analysis~\citep{han2024xbrlagent}.
Our framework abstracts these concerns into certificate components for source access, tag
selection, period alignment, unit normalization, computation replay, and claim support.

\section{Guiding Principles and Remaining Open Problems}
\label{app:open-problems}

This appendix collects guiding principles that follow from the proposal--certification--execution
framework and identifies remaining open problems.

\subsection{Guiding Principles}

We summarize the guiding principles as follows.

\paragraph{Principle 1: Computability is not permissibility.}
A system may be able to represent, generate, or submit an action without that
action being permitted:
\(
P\subsetneq C.
\)

\paragraph{Principle 2: Permissibility is trace-dependent.}
Permissibility is not only an action-level property. It may depend on the trace
in which actions occur: state, timing, evidence, computation, source access, and
policy history.

\paragraph{Principle 3: Generation and certification are distinct roles.}
A generating machine proposes candidate traces, while a Permissibility Machine
certifies whether a generated trace may execute under a policy system \(\Pi\).

\paragraph{Principle 4: No certificate, no execution.}
Execution is not a direct consequence of generation. A trace becomes executable
only when accompanied by a certificate witnessing permissibility under \(\Pi\).

\paragraph{Principle 5: Proof memory supports certification.}
Persistent proof memory records trace lineage and policy lineage, allowing
certifiers to replay, audit, and re-certify trace-level obligations.

\paragraph{Principle 6: No self-certification.}
A high-stakes agent may propose evidence for its trace, but it should not be the sole authority
certifying that its own proposed trace is permissible.

\subsection{Open Problems}

\paragraph{Open Problem 1: Certifiable trace languages.}
Which policy-induced trace languages $L_\Pi \subseteq \Sigma^*$ admit efficient sound
certification? The answer should depend on the policy model: finite-state workflow rules
may admit lightweight certificates, counter and temporal policies may require proof memory,
and information-flow or proof-calculus policies may require richer evidence, provenance,
or cryptographic witnesses. A theory of certifiable trace languages should characterize
the tradeoff among policy expressiveness, proof-memory requirements, verification time,
and certificate size.

\paragraph{Open Problem 2: Proof-memory lower bounds.}
What proof-memory capacity is necessary to certify a given class of trace-dependent policies? Can
one prove Myhill--Nerode-style lower bounds showing that certain policies require distinguishing
a minimum number of proof-relevant history classes?

\paragraph{Open Problem 3: Certification rate-distortion.}
What is the best achievable certification boundary under constraints on memory, latency, privacy,
or verification cost? Can one derive rate--distortion laws relating proof-memory capacity to
certified-but-impermissible risk and missed-certification risk?

\paragraph{Open Problem 4: Policy drift and re-certification.}
How should certificates be maintained, invalidated, or updated when policy systems evolve? What
is the right measure of distance between policy versions, and how does this distance determine
re-certification burden?

\paragraph{Open Problem 5: Privacy-preserving proof memory.}
How can proof memory support auditability without exposing sensitive trace contents? What trace
properties can be certified using commitments, selective disclosure, or zero-knowledge proofs?

\paragraph{Open Problem 6: Compositional certification for multi-agent systems.}
When do certificates for individual agents, tools, or subtasks compose into certificates for
multi-agent traces? What additional obligations arise from delegation, communication, and shared
memory?

\paragraph{Open Problem 7: Benchmarks for certified agents.}
How should the community evaluate trace-level certification rather than output accuracy alone?
What datasets, tasks, metrics, and failure cases are needed to measure generated permissibility risk,
certified-but-impermissible risk, missed certification, executable yield, and certification cost?



\end{document}